\documentclass[11pt]{article}%
\usepackage{amsfonts}
\usepackage{amsmath,bbm,dsfont,mathrsfs,mathtools,appendix}
\usepackage{amssymb}
\usepackage{todonotes}
\usepackage{graphicx}
\usepackage{fullpage}%
\usepackage[colorlinks=true,linkcolor=blue,citecolor=red,plainpages=false,pdfpagelabels]%
{hyperref}%
\setcounter{MaxMatrixCols}{30}
\providecommand{\U}[1]{\protect\rule{.1in}{.1in}}

\newtheorem{theorem}{Theorem}

\newtheorem{lemma}[theorem]{Lemma}

\newtheorem{remark}[theorem]{Remark}

\newenvironment{proof}[1][Proof]{\noindent\textbf{#1.} }{\ \rule{0.5em}{0.5em}}


\newcommand{\cB}{{\mathcal{B}}}

\newcommand{\cH}{\mathcal{H}}

\newcommand{\N}{\mathbbm{N}}

\newcommand{\nn}{\nonumber}

\newcommand{\cP}{\mathcal{P}}

\newcommand{\cM}{\mathcal{M}}
\newcommand{\cN}{\mathcal{N}}

\newcommand{\1}{\mathbbm{1}}
\newcommand{\cD}{\mathcal{D}}

\def\>{{\rangle}}
\def\<{{\langle}}
\newcommand{\be}{\begin{equation}}
\newcommand{\ee}{\end{equation}}
\newcommand{\bea}{\begin{eqnarray}}
\newcommand{\eea}{\end{eqnarray}}

\newcommand{\eps}{\varepsilon}

\newcommand{\kb}[1]{|#1\rangle\!\langle#1|} 

\newcommand{\Tr}{\mathrm{Tr}}

\newcommand{\comment}[1]{}

\newcommand{\err}{\operatorname{err}}

\newcommand{\tr}{{\rm Tr}}


\usepackage{color}
\definecolor{colorthree}{rgb}{0.01,0.51,0.93}

\title{Interpolating between symmetric and asymmetric hypothesis testing}

\allowdisplaybreaks

\begin{document}
{\author{ Robert Salzmann\thanks{University of Cambridge, Department of Applied
Mathematics and Theoretical Physics, Wilberforce Road, Cambridge CB3 0WA,
United Kingdom}
\and Nilanjana Datta\footnotemark[1]}}
\maketitle
\begin{abstract}
The task of binary quantum hypothesis testing is to determine the state of a quantum system via measurements on it, given the side information that it is in one of two possible states, say $\rho$ and $\sigma$. This task is generally studied in either the {\em{symmetric setting}}, in which the two possible errors incurred in the task (the so-called type I and type II errors) are treated on an equal footing, or the {\em{asymmetric setting}} in which one minimizes the type II error probability under the constraint that the corresponding type I error probability is below a given threshold.
Here we define a one-parameter family of binary quantum hypothesis testing tasks, which we call $s$-{\em{hypothesis testing}}, and in which the relative significance of the two errors are weighted by a parameter $s$. In particular, $s$-hypothesis testing interpolates continuously between the regimes of symmetric and asymmetric hypothesis testing. Moreover, if arbitrarily many identical copies of the system are assumed to be available, then the minimal error probability of $s$-hypothesis testing is shown to decay exponentially in the number of copies, with a decay rate given by a quantum divergence which we denote as $\xi_s(\rho\|\sigma)$, and which satisfies a host of interesting properties. Moreover, this one-parameter family of divergences interpolates continuously between the corresponding decay rates for symmetric hypothesis testing (the quantum Chernoff divergence) for $s = 1$, and asymmetric hypothesis testing (the Umegaki relative entropy) for $s = 0$.  

\end{abstract}
\section{Introduction}
Discriminating between two states of a quantum system is a fundamental constituent of many quantum information theoretic tasks. Suppose a person (say, Bob) receives a quantum system $A$ which is in one of two possible states $\rho$ and $\sigma$. In order to infer what the actual state is, Bob does a measurement on $A$, which is most generally given by a POVM $\{\Lambda, \1-\Lambda\}$. In the language of binary hypothesis testing, one considers two hypotheses, the {\em{null hypothesis}} $H_0:\rho$ and the {\em{alternative hypothesis}} $H_1:\sigma$, and $0 \le \Lambda \le \1$ is referred to as a {\em{test}}.
Bob can make errors in his inference and the associated error probabilities are given as follows: $\alpha(\Lambda)\equiv {\mathbb{P}}(H_1|H_0) := \tr((\1-\Lambda)\rho)$ and $\beta(\Lambda)\equiv {\mathbb{P}}(H_0|H_1) := \tr(\Lambda\sigma)$, where ${\mathbb{P}}(H_1|H_0)$ denotes the probability of accepting $H_1$ when $H_0$ is true, and $\alpha(\Lambda)$ and $\beta(\Lambda)$ are referred to as the {\em{type I}} and {\em{type II}} error probabilities\footnote{They are also called the error
probabilities of the {\em{first}} and {\em{second kind}}, respectively.}, respectively. There is a tradeoff between these error probabilities and there are different ways of optimising them depending on the relative importance given to the two hypotheses.

In the setting of {\em{asymmetric hypothesis testing}}, one minimizes the type II error probability (over all possible POVMs) under the constraint that the type I error probability is not larger than  a given threshold value, say $\eps \in (0,1)$.
The minimal type II error probability under this constraint is hence given by
\begin{align}
\label{eq:SteinError}
\beta_\eps(\rho\|\sigma) \coloneqq \min\left\{\beta(\Lambda)\Big|\, \alpha(\Lambda) \le \eps,\,0\le\Lambda\le\1\right\}.
\end{align}

In contrast, in the setting of {\em{symmetric hypothesis testing}}, the two error probabilities are treated in a symmetric manner. In fact, one also takes into account the prior probabilities associated to the two hypothesis (i.e.~the probabilities that the states are $\rho$ and $\sigma$, respectively); if $p$ (resp.~$(1-p)$) is the prior probability of the hypothesis $H_0$ (resp.~$H_1$), then for any test $\Lambda$, one considers the minimal value of the  Bayesian error probability given by
   \begin{align}
p_{\err}(p,\rho,\sigma) \coloneqq \min_{0\le\Lambda\le\1} \Big(p\,\alpha(\Lambda) + (1-p)\beta(\Lambda)\Big).
\end{align}
One typical scenario where this setting is relevant is if one consider a quantum source which emits the state $\rho$ with probability $p$ and $\sigma$ with probability $1-p$. The goal is then to find the best possible measurement for discriminating the quantum states emitted by the source. On average, the minimal probability for making an error is then given by $p_{\err}(p,\rho,\sigma)$.

As in the classical case, quantum hypothesis testing was first studied in the so-called {\em{asymptotic i.i.d.~setting}} in which Bob receives multiple (say, $n$) identical copies of the system, instead of just one. He then does a joint measurement on all these $n$ systems in order to determine the state. The optimal asymptotic performance in the different settings (mentioned above) is quantified by the optimal exponential decay rates of the relevant error probabilities, evaluated in the asymptotic limit ($n \to \infty$). These are often called {\em{optimal error exponents}}, and in a series of seminal papers, they
have been shown to be given by two important {\em{quantum divergences.}}

In the asymmetric setting, Hiai and Petz~\cite{HiaiPetz_QuantumSteinLemma_1991} and Ogawa and Nagaoka~\cite{OgawaNagaoka_QuantumStein(StrongConverse)_2000} proved the quantum Stein's lemma which 
states that for all $\eps\in(0,1)$, 
$\beta_\eps(\rho^{\otimes n}\|\sigma^{\otimes n})$ decays exponentially to zero in the limit $n\to\infty$ with rate given by the {\em{Umegaki relative entropy (or divergence)}} \cite{Umegaki_ConditionalExpectation_1962}, i.e.
\begin{align}
\label{eq:QStein}
\lim_{n\to\infty}\frac{-\log\left(\beta_\eps(\rho^{\otimes n}\|\sigma^{\otimes n})\right)}{n} = D(\rho\|\sigma),
\end{align}
where $D(\rho\|\sigma)\coloneqq \tr(\rho(\log \rho - \log \sigma))$ if ${\rm{supp}} (\rho) \subseteq {\rm{supp}} (\sigma)$ and is set equal to infinity else. 

In the symmetric setting, Nussbaum and Szkola~\cite{Nussbaum_chernoff_2009}, and Audenaert {\em{et al.}}~\cite{Audenaert_QuantumChernoff_2007} proved that for all $p\in(0,1)$, $p_{\rm{err}}(p, \rho^{\otimes n}, \sigma^{\otimes n})$ decays exponentially to zero in the limit $n\to\infty$ with rate given by the so-called {\em{quantum Chernoff divergence}}

\begin{align}
\label{QCB} 
\lim_{n \to \infty} \frac{-\log \left(p_{\rm{err}}( \rho^{\otimes n}, \sigma^{\otimes n})\right)}{n}= \xi(\rho,\sigma),
\end{align}
where $\xi(\rho,\sigma) := \sup_{0\le\alpha\le 1} -\log\left(\Tr\left(\rho^{\alpha}\sigma^{1-\alpha}\right)\right).$
These seminal results are quantum analogues of their classical counterparts, for which the above divergences are replaced by the Kullback-Leibler divergence~\cite{CoverThomas_InformationTheory(Book)_2006} and the (classical) Chernoff divergence~\cite{Chernoff_AsymptoticEfficiencyforTests_1952}, respectively. 

\medskip

\noindent
A natural question to ask is the following: 
{\em{Is there a way to interpolate between symmetric and asymmetric hypothesis testing in an operational manner? }}
\smallskip

\noindent
In this paper, we answer the question affirmatively and prove that this interpolation scheme also leads to the definition of a one-parameter family of quantum divergences, which reduce to the quantum relative entropy and to the quantum Chernoff divergence for the two extreme values of the parameter. Our interpolation arises from the consideration of the type II error probability under a constraint on the type I error probability which depends on the parameter (say, $s$) as well as the type II error probability.

For the following discussion it will be useful to define a notion of equivalence between different hypothesis tesing errors\footnote{Here we use the term 'hypothesis testing error' in a broad sense to refer to any  real valued function $\gamma$ on pairs of states. This includes the optimal error probability for any binary quantum hypothesis testing task.}. We say two hypothesis testing errors $\gamma_{1}(\rho\|\sigma)$ and $\gamma_{2}(\rho\|\sigma)$ are equivalent, denoted by $\gamma_1 \sim \gamma_2$, if there exist constants $c,C>0$ such that for all quantum states $\rho,\sigma$
\begin{align}
c\,\gamma_2(\rho\|\sigma) \le \gamma_1(\rho\|\sigma) \le C \gamma_2(\rho\|\sigma).
\end{align}
Clearly, if $\gamma_1\sim\gamma_2$ and $\gamma_1(\rho^{\otimes n}\|\sigma^{\otimes n})$ decays exponentially to zero for $n\to\infty$ with some exponential rate, then also $\gamma_2(\rho^{\otimes n}\|\sigma^{\otimes n})$ decays exponentially to zero with the same rate. On the other hand, the fact that two hypothesis testing errors have the same exponential decay rate in the i.i.d setting does not imply that they are equivalent. To see this note that in the setting of asymmetric hypothesis testing for all $\eps_1,\eps_2\in(0,1)$ with $\eps_1\neq\eps_2$ the corresponding minimal error probabilities $\beta_{\eps_1}$ and $\beta_{\eps_2}$ are not equivalent (cf. Lemma~\ref{lem:QsCEquiv}). However as mentioned above, they have the same exponential decay rate, since the strong converse property in quantum Stein's lemma holds \eqref{eq:QStein}.

In the setting of symmetric hypothesis testing it is easy to see that for all $p,q\in(0,1)$ the corrresponding minimal error probabilities in symmetric hypothesis testing are equivalent, i.e. $p_{\err}(p,\cdot,\cdot) \sim p_{\err}(q,\cdot,\cdot).$ 
Moreover, in \cite[Lemma 4.10]{SalzDatta_RTSD_2021} it was shown that for all $p\in(0,1)$ the minimal error probability, $p_{\err}(p,\cdot,\cdot)$, is also equivalent to the hypothesis testing error probability $Q_{\min}$ defined as
\begin{equation}
\label{eq:Qmin}
Q_{\min}(\rho,\sigma) := \min\left\{\beta(\Lambda)\Big|\,  \alpha(\Lambda) \le \beta(\Lambda),\,0\le\Lambda\le\1\right\}.\footnote{Note that in \cite{SalzDatta_RTSD_2021} $Q_{\min}$ was defined  with an additional prefactor of $2$ because of the resource theoretic perspective taken in that paper. Here, however, we omit this prefactor to make the similarity to \eqref{eq:SteinError} more apparent. Moreover, note that as discussed in \cite{SalzDatta_RTSD_2021}, $Q_{\min}(\rho,\sigma)$ is symmetric in $\rho$ and $\sigma$ and that the inequality in the constraint in \eqref{eq:Qmin} can actually be replaced by an equality.}
\end{equation}
This implies, in particular, that the optimal error exponent in the symmetric setting (given by the left hand side of \eqref{QCB}) can be expressed in terms of $Q_{\min}$, and we have
\begin{align}
     \lim_{n \to \infty}  \frac{-\log \left(p_{\err}( \rho^{\otimes n}, \sigma^{\otimes n})\right)}{n} \equiv \lim_{n \to \infty}  \frac{-\log \left(Q_{\min}( \rho^{\otimes n}, \sigma^{\otimes n})\right)}{n}&= \xi(\rho,\sigma),
\end{align}

The similarity between the minimum error probability $\beta_\eps$ (given by~\eqref{eq:SteinError}) in the asymmetric setting, and the quantity $Q_{\min}$ (given by~\eqref{eq:Qmin}) in the symmetric setting, leads us naturally to define {\em{a continuous one-parameter family of hypothesis testing tasks}} : 
For $s\ge 0$\footnote{Note that for $s<0$, $Q^{(s)}(\rho\|\sigma)=0$ for all states $\rho$ and $\sigma$, and hence the range $s<0$ is not of interest.},
let us define the minimal type II error probability such that the type I error probability is upper bounded by the $s^{th}$ power of the type II error probability as follows: 
\begin{align}
\label{eq:DefQs}
 Q^{(s)}(\rho\|\sigma) &:=  \min\left\{\beta(\Lambda)\Big|\,  \alpha(\Lambda) \le \beta(\Lambda)^s,\,0\le\Lambda\le\1\right\}
\end{align}

More generally, we can define for any $C\ge 0$, the following minimal type II error probability:
\begin{align}
\label{eq:DefQCs}
 Q_C^{(s)}(\rho\|\sigma) &:= \min\left\{\beta(\Lambda)\Big|\,  \alpha(\Lambda) \le C\beta(\Lambda)^s,\,0\le\Lambda\le\1\right\},
 \end{align}
which reduces to  $Q^{(s)}(\rho\|\sigma)$ for $C=1$. In particular, for $\eps>0$ and $s\ge 0$ taking $C= \eps^{1-s}$, the corresponding $Q_{\eps^{1-s}}^{(s)}$ interpolates between the asymmetric hypothesis testing error probability $\beta_\eps$ at $s=0$ and the error probability, $Q_{\min}$, associated with symmetric hypothesis testing at $s=1$. 
However, in Lemma~\ref{lem:QsCequalities} we prove that for any $s >0$, the error probabilities $Q_C^{(s)}$ are equivalent for all $C>0$.

Let us try to get an intuitive understanding of the family of minimal error probabilities, $Q^{(s)}$, defined above. By comparing $Q^{(s)}$ to $\beta_\eps$ and $Q_{\min}$, it is clear that each value of $s$ corresponds to a different weighting of the relative significance of type I and type II errors, in the sense that smaller the value of $s$, the more one wants to avoid a type II error compared to a type I error. To further motivate the particular form of $Q^{(s)}$ let us consider a medical example\footnote{Note that quantum hypothesis testing generalises classical hypothesis testing and hence the definition of $Q^{(s)}$ can equivalently be applied to a classical binary hypothesis testing task.} for the particular value $s=1/2$:\footnote{By a similar argument one can also get an intuitive understanding of $Q^{(s)}$ for $s=1/n$ for all natural numbers $n$.} Imagine one wants to construct a test to check whether a patient has a certain disease or not. Here, the null hypothesis $H_0$ is that the patient is healthy whereas the alternative hypothesis $H_1$ is that the patient is ill. Of course, incorrectly concluding that the patient is healthy when the patient is actually ill (i.e.~a false negative result) is worse than the other way around (i.e.~a false positive result). Hence, one is interested in constructing a test $\Lambda$ which has a smaller type II error probability than type I error probability, i.e.~$\beta(\Lambda)\le\alpha(\Lambda)$. On the other hand, a positive test results in the patient undergoing a treatment which might have damaging side effects. Imagine now that at the end of the treatment the test is applied again, and the treatment is extended if the test result came out positive for a second time. We assume for simplicity that both tests (before and after the treatment) are the same, and that if the patient was actually healthy but the first test gave a false positive result, the probability that also the second test gives a false positive result is the same. Now, due to the adverse effects of the treatment, it might be worse to get two false positive results in a row, which occurs with probability $\alpha(\Lambda)^2$, than getting a false negative result in the first test, which occurs with probability $\beta(\Lambda)$. In this case, we would want to optimise over all tests $\Lambda$ satisfying the constraint $\alpha(\Lambda)^2\le\beta(\Lambda)\le \alpha(\Lambda).$ Note that $Q^{(1/2)}$ exactly gives the minimal type II error probability under this constraint.\footnote{Note that the constraint $\beta(\Lambda)\le\alpha(\Lambda)$ will be trivially fullfilled for the optimal test in the minimisation of $Q^{(1/2})$ given in \eqref{eq:DefQs}. Mathematically this can be seen by noting that the inequality constraint in \eqref{eq:DefQs} can actually be replaced by an equality as it is shown in Lemma~\ref{lem:QsCequalities}.}

The above definitions motivate us to coin the term $s$-{\em{hypothesis testing}}: for any given $s >0$, it is the task of binary quantum hypothesis testing with hypotheses 
$H_0:\rho$ and $H_1:\sigma$ such that the minimal error probability of interest is given by $Q^{(s)}(\rho\|\sigma)$ (or equivalently by  $Q_C^{(s)}(\rho\Vert \sigma)$). We evaluate the optimal error exponents of $s$-hypothesis testing in the asymptotic i.i.d.~setting, and show that they are given by a continuous family of quantum divergences parametrized by $s > 0$. We denote these by $\xi_s(\rho\Vert \sigma)$. These are defined by \eqref{eq:xi-s} of Section~\ref{sec:xi-s} and are shown to satisfy a host of interesting properties (see Lemma~\ref{lem:xi-s-props}). Most interestingly, they converge to the Umegaki relative entropy and to the quantum Chernoff divergence in the limits $s \to 0$ and $s \to 1$, respectively. Hence, the task of $s$-hypothesis testing interpolates between asymmetric and symmetric hypothesis testing both in the one-shot\footnote{That is, when Bob is given a single copy of the quantum system $A$, which is either in the state $\rho$ or $\sigma$.} and in the asymptotic i.i.d.~setting.
\begin{center}
\begin{figure}[t]
  \includegraphics[width=0.85\textwidth]{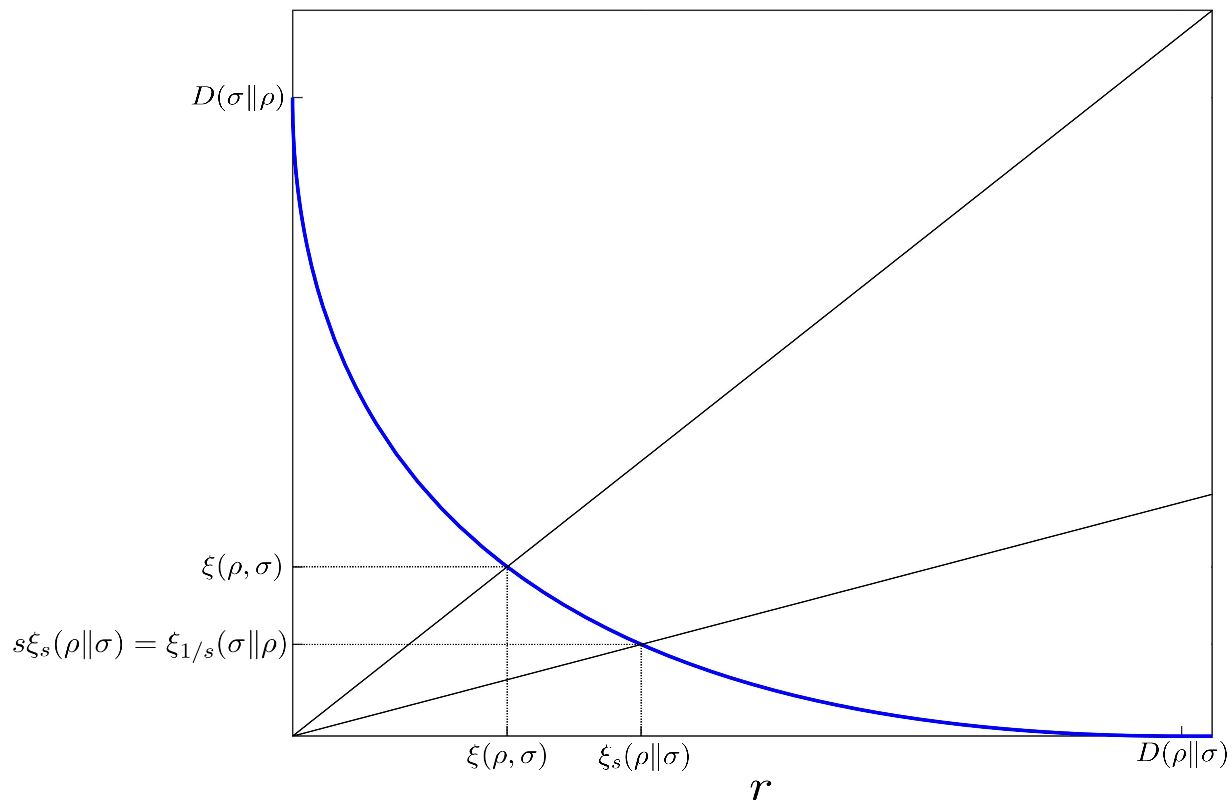}
	\caption{Here we plot in blue the right-hand side of \eqref{eq:HoeffResult} which equals $B(r|\rho\|\sigma)$ for $r>0$. We consider randomly generated states $\rho$ and $\sigma$ which have full support and hence $D(\rho\|\sigma),D(\sigma\|\rho)<\infty$. In black we plot linear graphs with slope $1$ and $s=1/3$, respectively. The points of intersection of these lines with the blue curve have $r$-coordinates equal to 
	$\xi(\rho,\sigma)$ and $\xi_s(\rho\|\sigma)$, respectively (cf. Theorem~\ref{thm:AEPQs} for definition of $\xi_s$ and points~\ref{one}-\ref{three} below as well as Lemma~\ref{lem:xi-s-props} and~\ref{lem:xisHoef} for the illustrated relations between the different divergences).}
	\label{fig:Hoeffding}
\end{figure}
\end{center}

\begin{figure}[t]
\centering
  \includegraphics[width=0.75\textwidth]{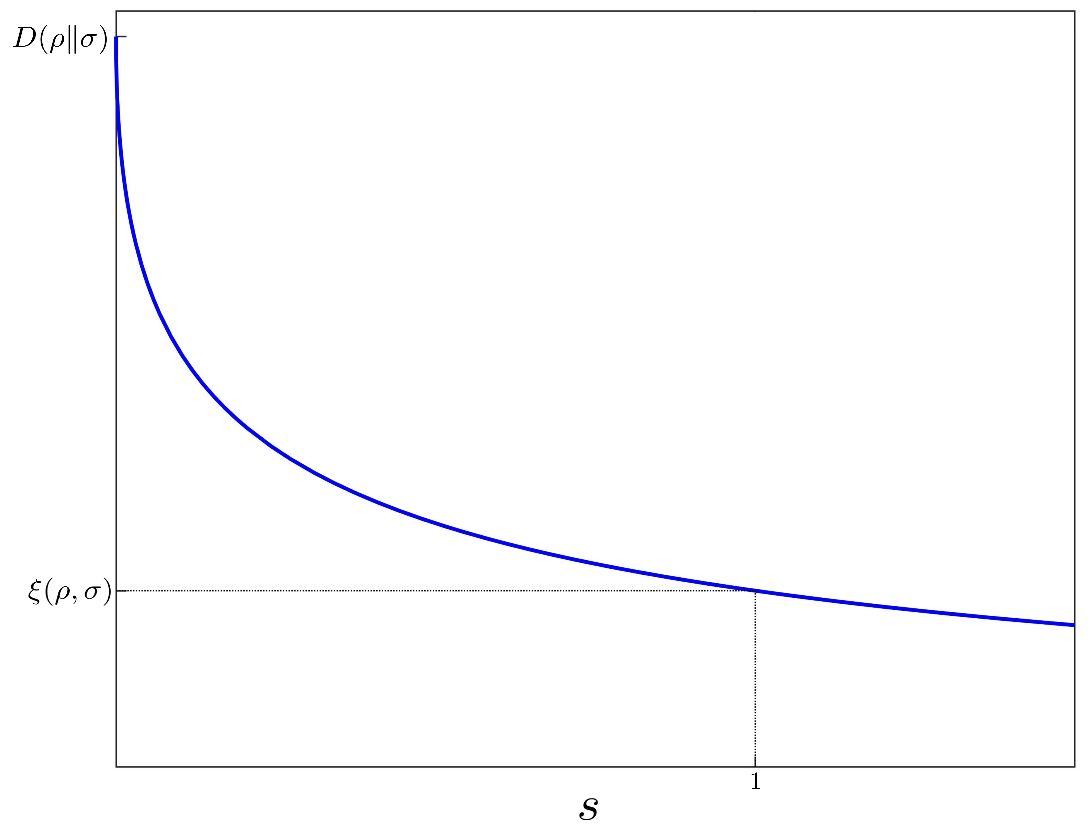}
	\caption{Here we plot $\xi_s(\rho\|\sigma)$ (see Theorem~\ref{thm:AEPQs} for definition) with $\rho$ and $\sigma$ being the same states as in Figure~\ref{fig:Hoeffding}. We denote $D(\rho\|\sigma)$ and $\xi(\rho,\sigma)$ at $s=0$ and $s=1$ respectively (cf. Lemma~\ref{lem:xi-s-props}).}
\end{figure}

Before proceeding further, let us recall another important hypothesis testing scenario which has been studied exhaustively in both classical and quantum binary hypothesis testing, and which characterizes the set of possible type I and type II error exponents. 
In this scenario, one evaluates the optimal type I error exponent (i.e.~the optimal exponential decay rate of the type I error probability), evaluated in the asymptotic limit, under the constraint that the type II error exponent is not smaller than a given threshold value (say, $r$):
\begin{align}
\label{eq:HoeffdingDef}
B(r|\rho\|\sigma) &\coloneqq \sup_{\substack{(\Lambda_n)_{n\in\N}\\0\le\Lambda_n\le\1}}\left\{\liminf_{n\to\infty}\frac{-\log\Big(\Tr\left((\1-\Lambda_n)\rho^{\otimes n}\right)\Big)}{n}\Bigg|\liminf_{n\to\infty}\frac{-\log\Big(\Tr\left(\Lambda_n\sigma^{\otimes n}\right)\Big)}{n} \ge r,\right\}
\end{align}
The above quantity is known as the {\em{quantum Hoeffding bound}} and it was shown by Hayashi \cite{Hayashi_Hoeffding(ErrorExponent)_2007} and Nagaoka \cite{Nagaoka_ConverseHoeffding_2006} (also consider \cite{AudenaertNussbaum_AsymptoticErrorRates_2008})  that for $r>0$ the optimal error exponent defined above is given by the following expression
\begin{align}
\label{eq:HoeffResult}
    B(r|\rho\|\sigma)&= \sup_{0\le\alpha\le 1} \frac{\alpha-1}{\alpha}\left(r - D_\alpha(\rho\|\sigma)\right),
    \end{align}
    where 
    $D_\alpha(\rho\|\sigma):= \frac{1}{\alpha - 1} \log (\tr(\rho^\alpha \sigma^{1-\alpha}))$
is the Petz-R\'enyi relative entropy of order $\alpha$ \cite{Petz_Quasi-entropies_1986}. From that it is easy to see that $B(r|\rho\|\sigma)<\infty$ if and only if $r>D_{\min}(\rho\|\sigma)$ where the min-relative entropy \cite{Datta_MinMaxRelativeEntr_2009} is defined by $D_{\min}(\rho\|\sigma)= -\log(\Tr(\pi_\rho \sigma))$ with $\pi_\rho$ being the projector onto the support of $\rho$ (cf. Lemma~\ref{lem:AppFinHoeff}). The classical analogue of~\eqref{eq:HoeffdingDef} was studied, as the name suggests, by Hoeffding~\cite{Hoeffding_AsympOptimalTests_1965}. It can be shown using~\eqref{eq:HoeffResult} that (see also~\cite{Hayashi_Hoeffding(ErrorExponent)_2007} and \cite{Nagaoka_ConverseHoeffding_2006}):
\begin{enumerate}
    \item\label{one} for $\rho\neq\sigma$\footnote{For $\rho=\sigma$ and hence $\xi(\rho,\sigma) =0$ the point \ref{one} is no longer true as $B(0|\rho\|\sigma) =\infty$. Then $\xi(\rho,\sigma)$ is just the infimum over all $r$ solving $B(r|\rho\|\sigma) 
    =r$.} the unique solution of the equation $B(r|\rho\|\sigma) =r $ is equal to the quantum Chernoff divergence $\xi(\rho, \sigma)$;
    \item\label{two} the infimum over all $r$ solving the equation $B(r|\rho\|\sigma) =0 $ is equal to the Umegaki relative entropy $D(\rho\Vert \sigma)$;
    \item \label{three} the limit $r\to 0$ recovers the Umegaki relative entropy with $\rho$ and $\sigma$ in reversed order, i.e. $\lim_{r\to 0} B(r|\rho\|\sigma) = D(\sigma\|\rho)$.
\end{enumerate}

Hence, the {\em{optimal error exponents}} of the asymmetric and symmetric setting can be recovered from the quantum Hoeffding bound. See Figure~\ref{fig:Hoeffding} for an illustration. Therefore, in a sense, the quantum Hoeffding bound also provides a way to interpolate between symmetric and asymmetric hypothesis testing. However, unlike the task of $s$-hypothesis testing introduced above, this interpolation is only at the level of optimal error exponents and is hence restricted  to the asymptotic i.i.d.~setting.

\subsection*{Main Results and layout of the paper}

Suppose Bob receives a finite-dimensional quantum system, $A$, with associated Hilbert space $\cH$ which is in one of two states $\rho$ and $\sigma$. Let $\cB(\cH)$ denote the set of linear operators on $\cH$, and $\cP(\cH)$ denote the set of positive semi-definite operators 
on $\cH$. Then $\rho, \sigma \in \cD(\cH)$, where the latter denotes the set of density matrices, i.e.~the set of positive semi-definite operators with unit trace. \comment{Bob's task is to determine
which of the two states it is in. In the language of binary quantum hypothesis testing the null and alternative hypotheses are $H_0:\rho$ and $H_1:\sigma$, respectively. In the task of $s$-{\em{hypothesis testing}} for any $s\ge 0$ the aim is to minimize the type II error probability $\beta(\Lambda) \equiv \tr(\Lambda \sigma)$, where $0\le \Lambda \le I$ denotes a test, under the constraint that the corresponding type I error probability  
$\alpha(\Lambda) \equiv \tr((I-\Lambda) \rho)$ is such that $\alpha(\Lambda)  \leq  \beta(\Lambda)^s$. Hence, in the one-shot setting, i.e.~when Bob receives a single copy of the system $A$, the quantity of interest is given by $Q^{(s)}(\rho\|\sigma)$, defined in~\eqref{eq:DefQCs}. In contrast, in the asymptotic i.i.d.~setting, in which Bob receives arbitrarily many (say, $n$) identical copies of the system $A$, the optimal performance of the $s$-hypothesis testing task
is quantified by the optimal exponential decay rate of $Q^{(s)}(\rho^{\otimes n}\|\sigma^{\otimes n})$. This is called the {\em{optimal error exponent}} of $s$-hypothesis testing and is given by
$$\liminf_{n\to\infty}\frac{-\log\left(Q^{(s)}(\rho^{\otimes n}\|\sigma^{\otimes n})\right)}{n}. $$}

\begin{itemize}
    \item Our main result, given by the following theorem, concerns the evaluation of the {\em{optimal error exponent}} of $s$-hypothesis testing which is given by
\begin{align}
\label{eq:OptExponent}
\liminf_{n\to\infty}\frac{-\log\left(Q^{(s)}(\rho^{\otimes n}\|\sigma^{\otimes n})\right)}{n}.
\end{align}
We show that the limit in the above expression exists and is given by a quantum divergence which we denote as $\xi_s(\rho\Vert\sigma)$.
\begin{theorem}
\label{thm:AEPQs}
For all $s> 0$ and states $\rho,\sigma$ we have
	\begin{align}
	\lim_{n\to\infty}\frac{-\log\left( Q^{(s)}(\rho^{\otimes n}\|\sigma^{\otimes n})\right)}{n} = \xi_s(\rho\|\sigma),
	\end{align}
	where
	$$
	   \xi_s(\rho\|\sigma) := \sup_{0\le \alpha\le 1}\frac{\log\left(\Tr\left(\rho^\alpha\sigma^{1-\alpha}\right)\right)}{\alpha(1-s)-1}.
	$$
\end{theorem}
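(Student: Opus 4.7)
My approach is to use the quantum Hoeffding bound~\eqref{eq:HoeffResult} as a black box, interpreting $\xi_s(\rho\|\sigma)$ operationally as the largest type II exponent $r$ that is compatible with a type I exponent of at least $sr$. Define
\[
r^*(s) \coloneqq \sup\bigl\{r \ge 0 \,:\, B(r|\rho\|\sigma) \ge s r\bigr\}.
\]
The strategy has three steps: (i) an algebraic rearrangement of \eqref{eq:HoeffResult} identifies $r^*(s)$ with the formula for $\xi_s(\rho\|\sigma)$; (ii) the direct part of the Hoeffding bound supplies an achievability sequence meeting the $s$-admissibility constraint; (iii) the very definition of $B$ yields the matching converse when applied to an optimizer of $Q^{(s)}$.

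\textbf{Identification step.} For $\alpha \in (0,1)$ the factor $(\alpha-1)/\alpha$ is negative, so the condition $\tfrac{\alpha-1}{\alpha}(r - D_\alpha(\rho\|\sigma)) \ge sr$ inside the sup defining $B(r|\rho\|\sigma)$ is equivalent to $r\,\tfrac{1 - \alpha(1-s)}{1 - \alpha} \le D_\alpha(\rho\|\sigma)$. Using $(1-\alpha)D_\alpha(\rho\|\sigma) = -\log\Tr(\rho^\alpha\sigma^{1-\alpha})$, this further simplifies to $r \le \log\Tr(\rho^\alpha\sigma^{1-\alpha})/(\alpha(1-s) - 1)$. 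Taking the supremum over $\alpha \in [0,1]$, with the endpoint $\alpha\to 0$ contributing $D_{\min}(\rho\|\sigma)$ by continuity, yields $r^*(s) = \xi_s(\rho\|\sigma)$.

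\textbf{Achievability.} Fix $r < \xi_s(\rho\|\sigma)$, so $B(r|\rho\|\sigma) > sr$, and pick $\eps > 0$ with $B(r|\rho\|\sigma) - 2\eps > s(r+\eps)$. By the direct part of the quantum Hoeffding bound~\cite{Hayashi_Hoeffding(ErrorExponent)_2007,Nagaoka_ConverseHoeffding_2006}, there is a sequence of tests $(\Lambda_n)$ with $\liminf_n -\log\beta(\Lambda_n)/n \ge r$ and $\liminf_n -\log\alpha(\Lambda_n)/n \ge B(r|\rho\|\sigma) - \eps$. For all large $n$ we then have $\alpha(\Lambda_n) \le e^{-n(B(r|\rho\|\sigma) - 2\eps)} \le e^{-ns(r+\eps)} \le \beta(\Lambda_n)^s$, so $\Lambda_n$ is feasible in~\eqref{eq:DefQs} and $Q^{(s)}(\rho^{\otimes n}\|\sigma^{\otimes n}) \le \beta(\Lambda_n) \le e^{-n(r-\eps)}$. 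Letting $\eps \to 0$ and $r \uparrow \xi_s(\rho\|\sigma)$ gives $\liminf_n -\log Q^{(s)}(\rho^{\otimes n}\|\sigma^{\otimes n})/n \ge \xi_s(\rho\|\sigma)$.

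\textbf{Converse and main difficulty.} Let $(\Lambda_n^*)$ attain $Q^{(s)}(\rho^{\otimes n}\|\sigma^{\otimes n})$. If its exponent strictly exceeded $\xi_s(\rho\|\sigma)$, choose $r'$ with $\xi_s(\rho\|\sigma) < r' < \liminf_n -\log Q^{(s)}(\rho^{\otimes n}\|\sigma^{\otimes n})/n$. Then $\liminf_n -\log\beta(\Lambda_n^*)/n \ge r'$, and the admissibility constraint $\alpha(\Lambda_n^*) \le \beta(\Lambda_n^*)^s$ forces $\liminf_n -\log\alpha(\Lambda_n^*)/n \ge s r'$. Hence $(\Lambda_n^*)$ is admissible in the sup in~\eqref{eq:HoeffdingDef} at threshold $r'$, giving $B(r'|\rho\|\sigma) \ge s r'$, which contradicts $r' > r^*(s) = \xi_s(\rho\|\sigma)$. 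The main technical care lies in correctly handling the boundary behaviour of $r \mapsto B(r|\rho\|\sigma)$: the function diverges on $[0, D_{\min}(\rho\|\sigma)]$ and vanishes for $r \ge D(\rho\|\sigma)$, so one must verify that $r^*(s)$ is approached from the regime where $B$ is finite and positive, and that the critical equation $B(r)=sr$ has a well-defined crossing. Monotonicity and convexity of $B$ in $r$, together with continuity of the expression defining $\xi_s$ at $\alpha = 0$, permit one to retreat to any $r < r^*(s)$ without loss and then pass to the limit, so the qualitative statement of the theorem survives these edge subtleties.
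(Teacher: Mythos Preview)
Your overall strategy is the same as the paper's: both hinge on the relation $B(\xi_s(\rho\|\sigma)\,|\,\rho\|\sigma)=s\,\xi_s(\rho\|\sigma)$ (your identification step is the content of Lemma~\ref{lem:xisHoef}). However, there are two genuine gaps in your execution.

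\textbf{Achievability gap.} In your chain $\alpha(\Lambda_n)\le e^{-n(B(r)-2\eps)}\le e^{-ns(r+\eps)}\le\beta(\Lambda_n)^s$, the last inequality requires $\beta(\Lambda_n)\ge e^{-n(r+\eps)}$, i.e.\ a \emph{lower} bound on the type~II error. But the direct part of the Hoeffding bound only furnishes $\liminf_n(-\log\beta(\Lambda_n))/n\ge r$, which is an \emph{upper} bound on $\beta(\Lambda_n)$; nothing prevents $\beta(\Lambda_n)$ from decaying much faster than $e^{-nr}$, in which case the feasibility constraint $\alpha(\Lambda_n)\le\beta(\Lambda_n)^s$ can fail. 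The paper sidesteps this by first passing to the unconstrained surrogate $p_{\err}^{(s)}(\rho,\sigma)=\min_{\Lambda}\bigl(\alpha(\Lambda)^{1/s}+\beta(\Lambda)\bigr)$ (for which no feasibility issue arises) and only afterwards invoking the equivalence $Q^{(s)}\sim p_{\err}^{(s)}$ of Lemma~\ref{lem:QsCEquiv}. Your direct route can be repaired, e.g.\ by replacing $\Lambda_n$ with $(1-\delta_n)\Lambda_n+\delta_n\1$ for $\delta_n=e^{-n(r+\eps)}$, which forces $\beta\ge\delta_n$ while only decreasing $\alpha$ and changing the $\beta$-exponent by at most $\eps$; but as written the step is unjustified.

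\textbf{Converse gap.} Your converse assumes $\liminf_n(-\log Q^{(s)})/n>\xi_s$ and derives a contradiction, hence it only shows $\liminf_n(-\log Q^{(s)})/n\le\xi_s$. Combined with achievability this pins down the $\liminf$, but the theorem asserts that the \emph{limit} exists, so you still need $\limsup_n(-\log Q^{(s)})/n\le\xi_s$. Rerunning your argument with $\limsup$ does not work against the definition~\eqref{eq:HoeffdingDef} of $B$, which is stated with $\liminf$: a subsequence along which $\beta(\Lambda_n^*)$ is small need not produce a sequence admissible in~\eqref{eq:HoeffdingDef}. The paper handles exactly this point by invoking the stronger $\limsup$-form of Nagaoka's converse~\eqref{eq:Hoef2}, which bounds $\limsup_n(-\log\alpha(\Lambda_n))/n$ by the Hoeffding expression for any sequence with $\limsup_n(-\log\beta(\Lambda_n))/n\ge r$; this is what your argument is missing.
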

This is proved in Section~\ref{sec:s-HT}.
\item In Section~\ref{sec:xi-s} we prove a host of interesting properties of the quantity $\xi_s(\rho\|\sigma)$, including the following: $(i)$ it is
indeed a quantum divergence, in the sense that it is non-negative and satisfies the so-called data-processing inequality, i.e.~monotonicity under the 
action of quantum channels, $(ii)$ it interpolates between the Umegaki relative entropy $D(\rho\Vert \sigma)$ and the quantum Chernoff divergence $\xi(\rho, \sigma)$ (see Lemma~\ref{lem:xi-s-props}), and $(iii)$ its relation with the quantum Hoeffding bound (see Lemma~\ref{lem:xisHoef}).
\end{itemize}

\section{A one-parameter family of quantum divergences $\xi_s(\rho\Vert \sigma)$}
\label{sec:xi-s}

For a pair of quantum states $\rho$ and $\sigma$ of a finite-dimensional quantum system, and $s\ge 0$\footnote{Note that the restriction $s\ge0$ is meaningful as for $s<0$ the corresponding $\xi_s(\rho\|\sigma)$ is infinite.} we define the following one-parameter family of quantum divergences:
\begin{align}
\label{eq:xi-s}
\xi_s(\rho\|\sigma) := \sup_{0\le \alpha\le 1}\frac{\log\left(\Tr\left(\rho^\alpha\sigma^{1-\alpha}\right)\right)}{\alpha(1-s)-1}.
\end{align}	
These divergences are of important operational significance in the information-theoretic task of $s$-hypothesis testing which was mentioned in the Introduction and is elaborated in Section~\ref{sec:s-HT}. In particular, the optimal error exponent of $s$-hypothesis testing, defined in~\eqref{eq:OptExponent}, is shown to be given by $\xi_s(\rho\|\sigma)$ (see Theorem~\ref{thm:AEPQs}). In the following lemma we list some interesting properties of $\xi_s(\rho\|\sigma)$. These include  the data-processing inequality~\eqref{eq:DPIxis} (which justifies it being called a quantum divergence), and the facts that in the limits $s \to 0$ and $s \to 1$ it reduces to the Umegaki relative entropy and the quantum Chernoff divergence, respectively.

\begin{lemma} [General properties of $\xi_s$]
\label{lem:xi-s-props}$ $
\begin{enumerate}
\item \label{functionalproperties} The map $s\mapsto\xi_s(\rho\|\sigma)$ is monotonically decreasing, convex and continuous.
In particular, continuity gives 
\begin{align}
\label{eq:LimitRel}
\lim_{s\to 0} \xi_s(\rho\|\sigma) &= \xi_0(\rho\|\sigma)= D(\rho\|\sigma),\\
\lim_{s\to 1}\xi_s(\rho\|\sigma) &=\xi_1(\rho\|\sigma)= \xi(\rho,\sigma),\label{eq:LimitChern}
\end{align}
where $D(\cdot\|\cdot)$ denotes the Umegaki relative entropy and $\xi(\cdot,\cdot)$ the quantum Chernoff divergence.
 Moreover, for any $c>0$ and $s_1,s_2\ge c$ and states $\rho$ and $\sigma$ having mutually non-orthogonal supports we have the Lipschitz continuity bound
\begin{align}
\label{eq:Lipsch}
|\xi_{s_1}(\rho\|\sigma) - \xi_{s_2}(\rho\|\sigma)| \le C|s_1-s_2|,
\end{align}
where $C\ge0$ only depends on $\rho,\sigma$ and $c$ but not on $s_1,s_2$.
\item We have the reciprocity relation
\begin{align}
\label{eq:xissymm}
\xi_s(\rho\|\sigma) =\frac{1}{s} \xi_{1/s}(\sigma\|\rho).
\end{align}
\item \label{xi-s-DivProp} For all $s\ge0$ and states $\rho,\sigma$ we have $\xi_s(\rho\|\sigma)\ge 0$, and $\xi_s(\rho\|\sigma) = 0$ if and only if $\rho=\sigma.$ Moreover, the map $(\rho,\sigma) \mapsto \xi_s(\rho\|\sigma)$ on the set $\cP(\cH)\times\cP(\cH)$ is jointly convex. In particular this gives that $\xi_s$ satisfies the data-processing inequality, i.e.~for all states $\rho,\sigma$ and linear completely positive trace-preserving (CPTP) maps 
$\cN$ on $\cP(\cH)$ we have
\begin{align}
\label{eq:DPIxis}
\xi_s(\cN(\rho)\|\cN(\sigma)) \le \xi_s(\rho\|\sigma). 
\end{align}
\end{enumerate}
\end{lemma}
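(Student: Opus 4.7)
The plan is to work throughout with
\[
f_\alpha(s) \coloneqq \frac{\log\Tr(\rho^\alpha\sigma^{1-\alpha})}{\alpha(1-s)-1} = \frac{c_\alpha}{(1-\alpha)+\alpha s}, \qquad c_\alpha \coloneqq -\log\Tr(\rho^\alpha\sigma^{1-\alpha}),
\]
so that $\xi_s(\rho\|\sigma) = \sup_{0\le\alpha\le 1} f_\alpha(s)$; the second equality uses $\alpha(1-s)-1 = -[(1-\alpha)+\alpha s]$, and $c_\alpha\ge 0$ on $[0,1]$ by the standard Hölder-type bound $\Tr(\rho^\alpha\sigma^{1-\alpha})\le 1$. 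The key observation is that each $f_\alpha$ is the reciprocal of a positive affine function of $s$, so it is non-negative, non-increasing, and convex on $[0,\infty)$, properties that pass verbatim to the supremum $\xi_s$ and immediately yield the monotonicity, convexity and continuity on $(0,\infty)$ claimed in item~1.

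For the endpoint evaluations, $\xi_1 = \sup_\alpha(-\log\Tr(\rho^\alpha\sigma^{1-\alpha})) = \xi(\rho,\sigma)$ is read off directly, while $f_\alpha(0) = D_\alpha(\rho\|\sigma)$ combined with the standard fact $\sup_{\alpha\in[0,1)} D_\alpha = D(\rho\|\sigma)$ (monotonicity of the Petz–R\'enyi family plus the limit $D_\alpha\to D$ as $\alpha\uparrow 1$) gives $\xi_0 = D(\rho\|\sigma)$. Continuity at $s=0$ I would obtain by sandwiching: monotonicity yields $\limsup_{s\to 0^+}\xi_s\le\xi_0$, while the pointwise limit $\xi_s\ge f_\alpha(s)\to D_\alpha$ for each fixed $\alpha\in[0,1)$ followed by a supremum over $\alpha$ delivers the reverse inequality. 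The Lipschitz bound on $s\ge c$ follows from $f_\alpha'(s) = -c_\alpha\alpha/\bigl[(1-\alpha)+\alpha s\bigr]^2$: the non-orthogonal-support hypothesis makes $c_\alpha$ uniformly bounded on $[0,1]$, while $(1-\alpha)+\alpha s\ge\min(1,c)$ bounds the denominator below, so the derivatives are uniformly bounded and the Lipschitz bound transfers to $\xi_s$ by the standard argument for suprema of differentiable families. The reciprocity relation in item~2 I would prove by the substitution $\beta=1-\alpha$ in the defining supremum: the numerator becomes $\log\Tr(\sigma^\beta\rho^{1-\beta})$, and a direct calculation rewrites the denominator as $s\bigl(\beta(1-1/s)-1\bigr)$, producing exactly the factor $1/s$.

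For item~3, non-negativity is visible from the representation of $f_\alpha$ above (with $\xi_0 = D\ge 0$ at the boundary handled separately). The equality case is settled by monotonicity plus reciprocity: $\rho=\sigma$ gives $c_\alpha\equiv 0$ hence $\xi_s = 0$; conversely, for $s\le 1$ monotonicity yields $\xi_s\ge\xi_1=\xi(\rho,\sigma)$, which vanishes only when $\rho=\sigma$ by the known property of the Chernoff divergence, and the reciprocity relation transfers the conclusion to $s>1$. Joint convexity is the one substantive step: Lieb's concavity theorem gives joint concavity of $(\rho,\sigma)\mapsto\Tr(\rho^\alpha\sigma^{1-\alpha})$ on $\cP(\cH)\times\cP(\cH)$ for $\alpha\in[0,1]$; composition with the monotone concave function $\log$ preserves joint concavity of $\log\Tr(\rho^\alpha\sigma^{1-\alpha})$, and division by the negative constant $\alpha(1-s)-1$ converts this into joint convexity of each $f_\alpha$, inherited by the supremum. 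Finally, the DPI follows at once from DPI of the Petz–R\'enyi divergences on $[0,1]$: any CPTP map $\cN$ satisfies $\Tr(\cN(\rho)^\alpha\cN(\sigma)^{1-\alpha})\ge\Tr(\rho^\alpha\sigma^{1-\alpha})$, so taking logs and dividing by the negative denominator $\alpha(1-s)-1$ reverses the inequality at the level of $f_\alpha$, and taking a sup over $\alpha$ yields $\xi_s(\cN(\rho)\|\cN(\sigma))\le\xi_s(\rho\|\sigma)$.

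The only genuinely nontrivial input is Lieb's concavity theorem used in the joint convexity argument; every other clause reduces to elementary analysis of the one-parameter family $\{f_\alpha\}_{\alpha\in[0,1]}$ of simple rational functions of $s$ together with standard properties of the Petz–R\'enyi divergences. I expect the mildest technical point to be the careful sandwich argument establishing continuity at $s=0$, since the extreme point $\alpha=1$ produces a $0/0$ indeterminacy that must be handled by the limit through $\alpha\in[0,1)$.
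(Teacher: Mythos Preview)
Your proposal is correct and follows essentially the same route as the paper: both arguments analyze the family $f_\alpha(s)=\log\Tr(\rho^\alpha\sigma^{1-\alpha})/(\alpha(1-s)-1)$, use H\"older's inequality for $\Tr(\rho^\alpha\sigma^{1-\alpha})\le 1$, Lieb's concavity theorem for joint convexity, and the Petz--R\'enyi data-processing inequality for the DPI. The only cosmetic differences are that the paper handles the $s\to 0$ limit by swapping the two suprema (using monotonicity in $s$) rather than a sandwich, and settles the equality case directly via the equality condition in H\"older's inequality rather than via reduction to the Chernoff divergence.
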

\begin{remark}
Note that \eqref{eq:LimitRel} together with \eqref{eq:xissymm} gives
\begin{equation*}
\lim_{s\to\infty}s\,\xi_s(\rho\|\sigma) = D(\sigma\|\rho),
\end{equation*}
which is the analogous relation to point \ref{three} for the $r\to 0$ limit of $B(r|\rho\|\sigma)$.
\end{remark}

\begin{proof}[Proof of Lemma~\ref{lem:xi-s-props}]
    We start with the proof of \ref{functionalproperties}. Since $\log\left(\Tr(\rho^{\alpha}\sigma^{1-\alpha})\right)\le 0$ for all $\alpha\in[0,1]$, we see that the maps $s \mapsto \frac{\log\left(\Tr\left(\rho^\alpha\sigma^{1-\alpha}\right)\right)}{\alpha(1-s)-1}$ and $s\mapsto \xi_s(\rho\|\sigma)$ are monotonically decreasing. Convexity of $\xi_s(\rho\|\sigma)$ in $s$ follows from convexity of $s\mapsto (1-\alpha(1-s))^{-1}.$
	For the Lipschitz bound \eqref{eq:Lipsch} let $c>0$ and  $s_1,s_2\ge c$. Then \eqref{eq:Lipsch} follows by noting that
	\begin{align*}
	|\xi_{s_1}(\rho\|\sigma) - \xi_{s_2}(\rho\|\sigma)| &\nn= \left|\sup_{0\le \alpha\le 1}\frac{\log\left(\Tr\left(\rho^\alpha\sigma^{1-\alpha}\right)\right)}{\alpha(1-s_1)-1} - \sup_{0\le \alpha\le 1}\frac{\log\left(\Tr\left(\rho^\alpha\sigma^{1-\alpha}\right)\right)}{\alpha(1-s_2)-1}\right| \\&\le\sup_{0\le\alpha\le 1} \left|\frac{\log\left(\Tr\left(\rho^\alpha\sigma^{1-\alpha}\right)\right)}{\alpha(1-s_1)-1} - \frac{\log\left(\Tr\left(\rho^\alpha\sigma^{1-\alpha}\right)\right)}{\alpha(1-s_2)-1}\right| \nn\\&=\sup_{0\le\alpha\le 1} \left|\frac{\alpha\log\left(\Tr\left(\rho^\alpha\sigma^{1-\alpha}\right)\right)}{(\alpha(1-s_2)-1)(\alpha(1-s_1)-1)}\right| |s_1 -s_2|\nn
	\\&\le \sup_{0\le\alpha\le1}\left|\frac{\alpha\log\left(\Tr\left(\rho^\alpha\sigma^{1-\alpha}\right)\right)}{(\alpha(1-c)-1)(\alpha(1-c)-1)}\right| |s_1 -s_2|.
	\end{align*}
	and using the fact that $C\coloneqq\sup_{0\le\alpha\le1}\left|\frac{\alpha\log\left(\Tr\left(\rho^\alpha\sigma^{1-\alpha}\right)\right)}{(\alpha(1-c)-1)(\alpha(1-c)-1)}\right|$ is finite as long as $\rho$ and $\sigma$ have mutually non-orthogonal supports.
	This shows continuity of $s\mapsto\xi_s(\rho\|\sigma)$ on $(0,\infty)$, and then~\eqref{eq:LimitChern} follows since
	\begin{align*}
	\xi_1(\rho\|\sigma) = \sup_{0\le\alpha\le 1}\Big(-\log\left(\Tr\left(\rho^\alpha\sigma^{1-\alpha}\right)\right)\Big) = \xi(\rho,\sigma).
	\end{align*}
	To prove~\eqref{eq:LimitRel}, note that
	\begin{align*}
	\lim_{s\to 0}\xi_s(\rho\|\sigma) &= \sup_{s\ge 0} \xi_s(\rho\|\sigma) = \sup_{s\ge 0}\sup_{0\le \alpha\le 1}\frac{\log\left(\Tr(\rho^\alpha\sigma^{1-\alpha}\right)}{\alpha(1-s)-1} =\sup_{0\le \alpha\le 1}\sup_{s\ge 0} \frac{\log\left(\Tr(\rho^\alpha\sigma^{1-\alpha}\right)}{\alpha(1-s)-1}\\&= \sup_{0\le \alpha\le 1} \frac{\log\left(\Tr(\rho^\alpha\sigma^{1-\alpha}\right)}{\alpha-1} = \sup_{0\le \alpha \le 1} D_{\alpha}(\rho\|\sigma) = D(\rho\|\sigma).
	\end{align*} 
		Here, we have used that the Petz-R\'enyi relative entropy $D_\alpha$ is monotonically increasing in $\alpha$ and satisfies $\lim_{\alpha\to 1}D_\alpha(\rho\|\sigma) = D(\rho\|\sigma)$ (see e.g. \cite{Tomamichel_FiniteResource(Book)_2016}).
		\smallskip

The relation \eqref{eq:xissymm} follows by noting that
\begin{align*}
\xi_s(\rho\|\sigma) = \sup_{0\le\alpha\le 1} \frac{\log\left(\Tr(\rho^{1-\alpha}\sigma^{\alpha})\right)}{(1-\alpha)(1-s) - 1} = \frac{1}{s} \sup_{0\le\alpha\le 1} \frac{\log\left(\Tr(\rho^{1-\alpha}\sigma^{\alpha})\right)}{\alpha(1-1/s)-1} = \frac{1}{s} \xi_{1/s}(\sigma\|\rho).
\end{align*}
\smallskip

Next we prove~\ref{xi-s-DivProp}. The first statement follows by noting that for all $\alpha\in[0,1]$ we have $\Tr(\rho^{\alpha}\sigma^{1-\alpha})\le 1$ by Hölder's inequality, with equality if and only if $\rho=\sigma.$ Since $(\rho,\sigma) \mapsto \Tr(\rho^{\alpha}\sigma^{1-\alpha})$ is jointly concave by Lieb's concavity theorem \cite{Lieb_ConvTraceFunctions(ConcTheo)_1973}, joint convexity of $(\rho,\sigma) \mapsto \xi_s(\rho\|\sigma)$ immediately follows. The data-processing inequality for $\xi_s$ then follows from joint convexity by the standard argument (see e.g.~\cite{FrankLieb_MonotonicityRenyiEntr_2013}). Alternatively the data-processing inequality can also be concluded by the data-processing inequality for the Petz-R\'enyi relative entropy $D_\alpha(\rho\Vert \sigma)$ \cite{Petz_Quasi-entropies_1986} and the identity
$\xi_s(\rho\|\sigma) = \sup_{0\le\alpha\le 1}\frac{(1-\alpha)D_\alpha(\rho\|\sigma)}{1-\alpha(1-s)}$.
\end{proof}

\begin{lemma}
	\label{lem:xisHoef}
	For all $s>0$ and $\rho\neq\sigma$ we have
	\begin{align}
	\label{eq:xisHoef}
	B(\xi_s(\rho\|\sigma) | \rho\|\sigma) = s\xi_s(\rho\|\sigma).
	\end{align}
	Moreover, in the above case, $\xi_s(\rho\|\sigma)$ is the unique solution to the equation $B(r | \rho\|\sigma) = sr$.
	In general, i.e. for all $s\ge 0$ and states $\rho,\sigma$, $\xi_s(\rho\|\sigma)$ is the infimum over all solutions to $B(r | \rho\|\sigma) = sr$.
\end{lemma}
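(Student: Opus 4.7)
The plan is to first verify the identity $B(\xi_s(\rho\|\sigma)|\rho\|\sigma) = s\,\xi_s(\rho\|\sigma)$ by directly comparing the two suprema appearing in \eqref{eq:HoeffResult} and \eqref{eq:xi-s}, and then to extract uniqueness from the monotonicity of $r \mapsto B(r|\rho\|\sigma)$.

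Writing $f(\alpha) := \log\Tr(\rho^\alpha \sigma^{1-\alpha})$ and substituting $D_\alpha = f(\alpha)/(\alpha-1)$ into \eqref{eq:HoeffResult}, the Hoeffding bound becomes $B(r|\rho\|\sigma) = \sup_{0\le\alpha\le 1}\frac{(\alpha-1)r - f(\alpha)}{\alpha}$. Setting $r = \xi_s(\rho\|\sigma)$ and pulling the $\alpha$-independent term $s\,\xi_s(\rho\|\sigma)$ into the supremum gives
\begin{align*}
B(\xi_s(\rho\|\sigma)|\rho\|\sigma) - s\,\xi_s(\rho\|\sigma) = \sup_{0\le\alpha\le 1} \frac{\bigl(\alpha(1-s)-1\bigr)\xi_s(\rho\|\sigma) - f(\alpha)}{\alpha}.
\end{align*}
For $s > 0$ the quantity $\alpha(1-s)-1$ is strictly negative throughout $[0,1]$, so rearranging the defining inequality $\xi_s(\rho\|\sigma) \ge f(\alpha)/(\alpha(1-s)-1)$ yields $(\alpha(1-s)-1)\xi_s(\rho\|\sigma) - f(\alpha) \le 0$ for every $\alpha$, whence the supremum above is $\le 0$. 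For equality I invoke that for $s > 0$ the map $\alpha \mapsto f(\alpha)/(\alpha(1-s)-1)$ is continuous on the compact interval $[0,1]$, so the supremum defining $\xi_s$ is attained at some $\alpha^\star \in [0,1]$; at $\alpha = 0$ the ratio equals $0$, which cannot equal $\xi_s(\rho\|\sigma) > 0$ (positivity follows from Lemma~\ref{lem:xi-s-props} as $\rho \ne \sigma$), so $\alpha^\star > 0$ and the numerator in the displayed supremum vanishes there.

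For uniqueness I will exploit that $B(\cdot|\rho\|\sigma)$ is a supremum of affine functions of $r$ whose slopes $(\alpha-1)/\alpha$ lie in $(-\infty, 0]$, and is therefore convex and non-increasing. The only slope-zero member of the family corresponds to $\alpha = 1$, where the affine function is identically $-f(1) = 0$; hence on the set $\{r : B(r|\rho\|\sigma) > 0\}$ the supremum must be realised by some $\alpha < 1$ of strictly negative slope, which forces $B$ to be strictly decreasing there. Since $r \mapsto s\,r$ is strictly increasing for $s > 0$, the graphs of $B$ and of this line can cross at most once, and combined with the previous paragraph this pins $\xi_s(\rho\|\sigma)$ as the unique crossing point.

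Finally, the general statement that $\xi_s(\rho\|\sigma)$ is the infimum over all solutions reduces to two boundary cases beyond the one just handled. For $s = 0$ the equation collapses to $B(r|\rho\|\sigma) = 0$, whose infimum of solutions is $D(\rho\|\sigma) = \xi_0(\rho\|\sigma)$ by point~\ref{two} in the introduction together with \eqref{eq:LimitRel}. For $\rho = \sigma$ one has $f \equiv 0$, so $B(r|\rho\|\sigma) = 0$ for every $r \ge 0$ while $\xi_s(\rho\|\sigma) = 0$, making $r = 0$ the infimum. The principal technical delicacy I anticipate is the attainment of the supremum in \eqref{eq:xi-s} at an interior point, which is why I rely explicitly on the continuity-plus-compactness argument above; the remainder of the proof is algebraic manipulation and the monotonicity observation.
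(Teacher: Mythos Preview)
Your approach is essentially the paper's: both compute $B(\xi_s)-s\xi_s$ via \eqref{eq:HoeffResult}, bound it above by $0$ using the defining supremum of $\xi_s$ (the paper introduces an auxiliary variable $\beta$ and then sets $\beta=\alpha$, whereas you rearrange the inequality $\xi_s\ge f(\alpha)/(\alpha(1-s)-1)$ directly---logically equivalent), bound it below by $0$ by evaluating at a maximiser $\alpha^\star$, and then deduce uniqueness from the monotonicity of $B$; the degenerate cases $s=0$ and $\rho=\sigma$ are handled the same way.

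One small slip: you claim that at $\alpha=0$ the ratio $f(\alpha)/(\alpha(1-s)-1)$ equals $0$. Under the paper's conventions $\rho^0=\pi_\rho$ (the support projector), so $f(0)=\log\Tr(\pi_\rho\sigma)$ and the ratio at $\alpha=0$ is $-f(0)=D_{\min}(\rho\|\sigma)$, which need not vanish. Your argument that $\alpha^\star>0$ therefore does not go through as written---although the paper itself never justifies $\alpha_0>0$ either, so this gap is shared rather than introduced by you. A second minor point: for $\rho=\sigma$ you assert $B(r|\rho\|\sigma)=0$ for every $r\ge 0$, but the formula \eqref{eq:HoeffResult} is only valid for $r>0$; operationally $B(0|\rho\|\sigma)=\infty$, as the paper notes.
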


\begin{proof}
Since for $\rho\neq \sigma$ we have $\xi_s(\rho\|\sigma)>0$, the relation \eqref{eq:HoeffResult} can be used. Moroever, assume without loss of generality that $\rho$ and $\sigma$ have mutually non-orthogonal supports, since otherwise $\xi_s(\rho\|\sigma)= \infty = B(\infty|\rho\|\sigma)$ is trivially satisfied. We have
	\begin{align}
	\nn B(\xi_s(\rho\|\sigma) | \rho\|\sigma) - s\xi_s(\rho\|\sigma) &= \sup_{0\le\alpha\le 1} \frac{\alpha-1}{\alpha}\Big(\xi_s(\rho\|\sigma) - D_{\alpha}(\rho\|\sigma)\Big)-s\xi_s(\rho\|\sigma) \nonumber\\
	&=\nn\sup_{0\le\alpha\le 1} \frac{1}{\alpha}\Big((\alpha(1-s)-1)\xi_s(\rho\|\sigma) - (\alpha-1)D_{\alpha}(\rho\|\sigma)\Big) \nonumber\\
	&=\nn\sup_{0\le\alpha\le 1} \frac{1}{\alpha}\Big((\alpha(1-s)-1)\sup_{0\le\beta\le 1}\frac{\log\left(\Tr\left(\rho^\beta\sigma^{1-\beta}\right)\right)}{\beta(1-s)-1}  - \log\left(\Tr\left(\rho^\alpha\sigma^{1-\alpha}\right)\right)\Big) \nonumber\\ 
	&=\nn\sup_{0\le\alpha\le 1}\inf_{0\le\beta\le 1} \frac{1}{\alpha}\Big((\alpha(1-s)-1)\frac{\log\left(\Tr\left(\rho^\beta\sigma^{1-\beta}\right)\right)}{\beta(1-s)-1}  - \log\left(\Tr\left(\rho^\alpha\sigma^{1-\alpha}\right)\right)\Big) \nonumber\\
	&\le\sup_{0\le\alpha\le 1} \frac{1}{\alpha}\Big((\alpha(1-s)-1)\frac{\log\left(\Tr\left(\rho^\alpha\sigma^{1-\alpha}\right)\right)}{\alpha(1-s)-1}  - \log\left(\Tr\left(\rho^\alpha\sigma^{1-\alpha}\right)\right)\Big)  = 0.
	\label{21}
	\end{align}	
	Here, we have used that $\alpha(1-s)-1\le 0$ to justify the equality in the fourth line.
	
	Since $s>0$, the map $\alpha \mapsto \frac{\log\left(\Tr(\rho^{\alpha}\sigma^{1-\alpha})\right)}{\alpha(1-s)-1}$ is continuous, and hence the supremum in \eqref{eq:xi-s} is actually a maximum, i.e.~there exists an $\alpha_0\in[0,1]$ attaining the finite supremum.
	Then, following on from the third line of~\eqref{21}, we get
	\begin{align*}
	\nn B(\xi_s(\rho\|\sigma) | \rho\|\sigma) - s\xi_s(\rho\|\sigma) &= \nn\nn\sup_{0\le\alpha\le 1} \frac{1}{\alpha}\Big((\alpha(1-s)-1)\frac{\log\left(\Tr\left(\rho^{\alpha_0}\sigma^{1-\alpha_0}\right)\right)}{\alpha_0(1-s)-1}  - \log\left(\Tr\left(\rho^\alpha\sigma^{1-\alpha}\right)\right)\Big)   \\ &\ge  \frac{1}{\alpha_0}\Big((\alpha_0(1-s)-1)\frac{\log\left(\Tr\left(\rho^{\alpha_0}\sigma^{1-\alpha_0}\right)\right)}{\alpha_0(1-s)-1}  - \log\left(\Tr\left(\rho^{\alpha_0}\sigma^{1-\alpha_0}\right)\right)\Big)  = 0.
	\end{align*}	
	Uniqueness follows since $r\mapsto B(r | \rho\|\sigma)$ is monotonically decreasing. 
	
	For $s=0$ the result follows by point~\ref{one} above: for $\rho=\sigma$, note that $\xi_s(\rho\|\sigma) = 0$ and $B(0|\rho\|\sigma)=\infty$ but $B(r|\rho\|\sigma)=0$ for all $r>0$. This implies that $\xi_s(\rho\|\sigma) =\inf\{r|\, B(r|\rho\|\sigma) = sr\}.$ 
\end{proof}
\section{A one-parameter family of hypothesis testing tasks: $s$-hypothesis testing}
\label{sec:s-HT}
In this section we study in detail the one-parameter family of binary quantum hypothesis testing tasks introduced in the Introduction, which we refer to as $s$-hypothesis testing,
with $s \geq 0$ being the parameter. We consider the task in somewhat more generality here: for the hypotheses $H_0:\rho$ and $H_1:\sigma$, we define the aim of $s$-hypothesis testing to be to minimize the type II error probability $\beta(\Lambda)\equiv \Tr(\Lambda\sigma)$, over all $0\le \Lambda \le I$, under the constraint that the corresponding type I error probability $\alpha(\Lambda)\equiv \Tr((\1-\Lambda)\rho)$ is at most equal to $C \beta(\Lambda)^s$ for some fixed $C>0$. 
Thus the minimal error probability of interest is 
$$ Q_C^{(s)}(\rho\|\sigma) := \min\left\{\beta(\Lambda)\Big|\,  \alpha(\Lambda) \le C\beta(\Lambda)^s,\,0\le\Lambda\le\1\right\}.$$ 
In the Introduction, we focussed on the particular case $C=1$, i.e.~on the minimal error probability $Q_1^{(s)}(\rho\|\sigma) \equiv Q^{(s)}(\rho\|\sigma)$. However, Lemma~\ref{lem:QsCEquiv} below establishes the equivalence $ Q^{(s)}\sim  Q_C^{(s)}$, and hence the optimal error exponents corresponding to $ Q^{(s)}$ and $Q_C^{(s)}$ are equal. Our ultimate aim is to prove our main result given in Theorem~\ref{thm:AEPQs} in the Introduction, which states that the optimal error exponent for $s$-hypothesis testing is equal to the quantum divergence $\xi_s(\rho\Vert \sigma)$ defined in~\eqref{eq:xi-s}. 
However, first we discuss some general properties of the minimal error probabilities $Q^{(s)}_C$ which we employ in the proof of this result.
\medskip

The following lemma shows that the inequality in the constraint in the definition of $ Q_C^{(s)}(\rho\|\sigma)$ can actually be taken to be an equality.
\begin{lemma}
	\label{lem:QsCequalities} For all $s,C>0$ and states $\rho,\sigma$ we have
	\begin{align}
	Q_C^{(s)}(\rho\|\sigma) =  \min\left\{\Tr(\Lambda\sigma)\Big|\,\, \Tr\left((\1-\Lambda)\rho\right)=C\left(\Tr(\Lambda\sigma)\right)^s,\,\,0\le\Lambda\le\1 \right\} = \left(Q_{1/C^{1/s}}^{(1/s)}(\sigma\|\rho)/C\right)^{1/s}.
	\label{eq:QSCequal}
	\end{align}
	In particular, for $C=1$ this gives the relation
	\begin{align}
	 Q^{(s)}(\rho\|\sigma) =\left(Q^{(1/s)}(\sigma\|\rho)\right)^{1/s}.
	\end{align}
\end{lemma}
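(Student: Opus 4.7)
The lemma makes two separate claims: first, that in the definition of $Q_C^{(s)}(\rho\|\sigma)$ the inequality constraint $\alpha(\Lambda) \le C\beta(\Lambda)^s$ can be strengthened to equality without changing the minimum, and second, a reciprocity formula swapping $\rho$ and $\sigma$. My plan is to prove these in this order, since the reciprocity is most naturally derived by transforming the equality-constrained minimization obtained in the first step.

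For the first claim, I would use a scaling argument combined with the intermediate value theorem. Let $\Lambda^*$ attain the minimum (existence follows from continuity of $\alpha$ and $\beta$ on the compact set $\{\Lambda : 0\le\Lambda\le\1\}$ together with closedness of the feasible set). For $t\in[0,1]$, the scaled test $t\Lambda^*$ is still a test, with $\beta(t\Lambda^*) = t\beta(\Lambda^*)$ linearly increasing in $t$ and $\alpha(t\Lambda^*) = 1 - t\Tr(\Lambda^*\rho)$ linearly decreasing in $t$. Defining the continuous function $f(t) \coloneqq C\beta(t\Lambda^*)^s - \alpha(t\Lambda^*)$, one has $f(0) = -1 < 0$ unconditionally (since $s>0$) and $f(1)\ge 0$ by the feasibility of $\Lambda^*$. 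If $f(1) > 0$ (i.e.\ the constraint were strict at the optimum), the IVT would produce some $t^*\in(0,1)$ with $f(t^*) = 0$ and $\beta(t^*\Lambda^*) = t^*\beta(\Lambda^*) < \beta(\Lambda^*)$, contradicting the optimality of $\Lambda^*$. Hence $f(1) = 0$, which is the equality form of the constraint. Packaging the scaling into this single continuous $f$ has the virtue of side-stepping edge cases (such as $\Tr(\Lambda^*\rho) = 0$ or $\beta(\Lambda^*) = 0$) without any additional bookkeeping.

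For the reciprocity, the key observation is the substitution $\tilde\Lambda \coloneqq \1 - \Lambda$, which is a bijection on the set of tests and swaps the two error types: viewing $\tilde\Lambda$ as a test for $\sigma$ versus $\rho$, one immediately computes $\tilde\alpha(\tilde\Lambda) = \Tr(\Lambda\sigma) = \beta(\Lambda)$ and $\tilde\beta(\tilde\Lambda) = \Tr((\1-\Lambda)\rho) = \alpha(\Lambda)$. Under this substitution the equality constraint $\alpha(\Lambda) = C\beta(\Lambda)^s$ rewrites as $\tilde\beta = C\tilde\alpha^s$, equivalently $\tilde\alpha = C^{-1/s}\tilde\beta^{1/s}$, which is precisely the equality form of the constraint defining $Q_{C'}^{(s')}(\sigma\|\rho)$ with $s' = 1/s$ and $C' = C^{-1/s}$. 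Since the original objective equals $\tilde\alpha = C^{-1/s}\tilde\beta^{1/s}$ on the transformed feasible set and $x\mapsto x^{1/s}$ is strictly increasing, minimizing it is equivalent to minimizing $\tilde\beta$; invoking the first claim once more (now applied to $Q_{C'}^{(s')}(\sigma\|\rho)$) to convert back to the inequality form yields
$$Q_C^{(s)}(\rho\|\sigma) = C^{-1/s}\bigl(Q_{1/C^{1/s}}^{(1/s)}(\sigma\|\rho)\bigr)^{1/s},$$
from which the $C=1$ specialization $Q^{(s)}(\rho\|\sigma) = (Q^{(1/s)}(\sigma\|\rho))^{1/s}$ is immediate. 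I do not anticipate any substantive obstacle beyond the routine bookkeeping in the scaling step described above; the whole argument rests on the linear structure of $\alpha$ and $\beta$ in $\Lambda$ and on the $\Lambda\leftrightarrow \1-\Lambda$ symmetry.
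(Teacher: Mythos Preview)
Your proposal is correct and follows essentially the same approach as the paper: the scaling $\Lambda^* \mapsto t\Lambda^*$ together with the intermediate value theorem for the first equality, and the substitution $\Lambda \leftrightarrow \1-\Lambda$ for the reciprocity. The only cosmetic difference is that you phrase the first step as a contradiction showing the optimizer itself satisfies equality, whereas the paper uses the same scaling to exhibit a feasible test for the equality-constrained problem with $\beta$ no larger than $Q_C^{(s)}(\rho\|\sigma)$; the underlying mechanism is identical.
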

\begin{proof}
	We begin by showing the first equality in \eqref{eq:QSCequal}, i.e.
	\begin{align}
	\label{eq:Qscfirstequal}
	Q_C^{(s)}(\rho\|\sigma) =  \min\left\{\Tr(\Lambda\sigma)\Big|\,\,  \Tr\left((\1-\Lambda)\rho\right)=C\left(\Tr(\Lambda\sigma)\right)^s,\,\,0\le\Lambda\le\1 \right\}.
	\end{align}
	Here, it only remains to show that $Q_C^{(s)}(\rho\|\sigma)$ is lower bounded by the right-hand side as the upper bound is immediate. Let $\Lambda_\star$ be a minimiser in \eqref{eq:DefQCs}, in particular
	\begin{align*}
	\Tr\left((\1-\Lambda_\star)\rho\right)\le C\left(\Tr(\Lambda_\star\sigma)\right)^s.
	\end{align*}
	By continuity there exists $t\in[0,1]$ such that $C\left(\Tr(t\Lambda_\star\sigma)\right)^s=\Tr\left((\1-t\Lambda_\star)\rho\right)$. Since
	\begin{align*}
	\Tr(t\Lambda_\star\sigma) \le \Tr(\Lambda_\star\sigma) = Q_C^{(s)}(\rho\|\sigma),
	\end{align*}
	it is clear that $Q_C^{(s)}(\rho\|\sigma)$ can be expressed by~\eqref{eq:Qscfirstequal}, i.e.~with the constraint being given by an equality.
	For the second equality in \eqref{eq:QSCequal} note that the established equation \eqref{eq:Qscfirstequal} applied to $Q_{1/C}^{(1/s)}(\sigma\|\rho)$ gives
	\begin{align*}
	\nn\left(Q_{1/C^{1/s}}^{(1/s)}(\sigma\|\rho)\right)^{1/s} &= \left(\min\left\{\Tr(\Lambda\rho)\Big|\,\, \left(\frac{\Tr\left(\Lambda\rho\right)}{C}\right)^{1/s}=\Tr((\1-\Lambda)\sigma),\,\,0\le\Lambda\le\1 \right\}\right)^{1/s} \\\nn &=\left(\min\left\{\Tr((\1-\Lambda)\rho)\Big|\,\, \Tr\left((\1-\Lambda)\rho\right)=C\left(\Tr(\Lambda\sigma)\right)^s,\,\,0\le\Lambda\le\1 \right\}\right)^{1/s} \\ &=\nn
	C^{1/s}\min\left\{\Tr(\Lambda\sigma)\Big|\,\,  \Tr\left((\1-\Lambda)\rho\right)=C\left(\Tr(\Lambda\sigma)\right)^s,\,\,0\le\Lambda\le\1 \right\} \\ &
	= C^{1/s} Q_C^{(s)}(\rho\|\sigma).
	\end{align*}
\end{proof}

In order to derive the optimal error exponent for  $s$-hypothesis testing, 
it will be useful to consider the following unconstrained optimisation problem:
\begin{align}\label{25}
p^{(s)}_{\err}(\rho,\sigma) := \min_{0\le\Lambda\le\1} \Big(\left(\Tr\left((\1-\Lambda)\rho\right)\right)^{1/s} + \Tr\left(\Lambda\sigma\right)\Big).
\end{align}
It is easy to show that $p^{(s)}_{\err}$ is monotonic under CPTP maps $\cN$, i.e.
\begin{align}
\label{eq:perrMono}
p_{err}^{(s)}(\cN(\rho),\cN(\sigma))\ge p_{err}^{(s)}(\rho,\sigma).
\end{align}
Moreover, $p_{\err}^{(s)}$ has a similar relationship to $Q^{(s)}$ as $p_{\err}$ has to $Q_{\min}$. In particular the following lemma shows that both hypothesis testing errors are equivalent, i.e. $p_{\err}^{(s)} \sim Q^{(s)}$. Moreover, we see that $p^{(s)}_{\err}$ is actually equivalent to $Q_C^{(s)}$ for all $C>0$ which establishes equivalence of all $Q_C^{(s)}$ for different values of $C>0.$

\begin{lemma}
\label{lem:QsCEquiv}
	For all $s,C>0$ we have
	\begin{align}
	Q^{(s)}_C \sim p_{\err}^{(s)}.
	\end{align}
	In particular this gives for all $C_1,C_2>0$
	\begin{align}
	Q^{(s)}_{C_1}\sim Q^{(s)}_{C_2}.
	\end{align}
	On the other hand, for $s=0$ and $C_1,C_2\in(0,1)$ with $C_1\neq C_2$ the hypothesis testing errors $Q^{(0)}_{C_1}$ and $Q^{(0)}_{C_2}$ are not equivalent. Equivalently this implies that for $\beta_{\eps_1} \nsim \beta_{\eps_2},$ for all $\eps_1,\eps_2\in(0,1)$ such that $\eps_1\neq\eps_2$.
\end{lemma}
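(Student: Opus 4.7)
The plan is to first establish the equivalence $Q_C^{(s)} \sim p_\err^{(s)}$ for every $s,C>0$, from which $Q^{(s)}_{C_1}\sim Q^{(s)}_{C_2}$ follows by transitivity of $\sim$. The non-equivalence statement at $s=0$ is of a different nature: since $\beta(\Lambda)^0 = 1$ whenever $\beta(\Lambda)>0$, the constraint $\alpha(\Lambda)\le C\beta(\Lambda)^0$ reduces to $\alpha(\Lambda)\le C$, i.e.~$Q^{(0)}_C = \beta_C$; so it suffices to prove $\beta_{\eps_1}\nsim\beta_{\eps_2}$ whenever $\eps_1\neq\eps_2$ lie in $(0,1)$, which I will handle separately using second-order asymptotics of asymmetric hypothesis testing.

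For the direction $p_\err^{(s)}\le(1+C^{1/s})\,Q_C^{(s)}$, I would take any test $\Lambda$ feasible for $Q_C^{(s)}$: the constraint $\alpha(\Lambda)\le C\beta(\Lambda)^s$ gives $\alpha(\Lambda)^{1/s}\le C^{1/s}\beta(\Lambda)$, so $\alpha(\Lambda)^{1/s}+\beta(\Lambda)\le(1+C^{1/s})\beta(\Lambda)$, and minimizing over feasible $\Lambda$ yields the claim.

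The reverse direction $Q_C^{(s)}\le\max\{1,C^{-1/s}\}\,p_\err^{(s)}$ is the main obstacle. Let $\Lambda$ attain $p_\err^{(s)}(\rho,\sigma)$ and write $a=\alpha(\Lambda)$, $b=\beta(\Lambda)$, $p=a^{1/s}+b$. If $a\le Cb^s$ then $\Lambda$ is already feasible for $Q_C^{(s)}$ and one obtains $Q_C^{(s)}\le b\le p$ directly. Otherwise I would interpolate between $\Lambda$ and $\1$ by setting $\Lambda_\mu\eqdef(1-\mu)\Lambda+\mu\1$ for $\mu\in[0,1]$, which is admissible and satisfies $\alpha(\Lambda_\mu)=(1-\mu)a$ together with $\beta(\Lambda_\mu)=b+\mu(1-b)$. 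The continuous function $f(\mu)\eqdef(1-\mu)a - C(b+\mu(1-b))^s$ satisfies $f(0)=a-Cb^s>0$ (by the case assumption) and $f(1)=-C<0$, so the intermediate value theorem supplies a $\mu^*\in(0,1)$ at which the $Q_C^{(s)}$-constraint holds with equality; rearranging then gives $\beta(\Lambda_{\mu^*})=\bigl((1-\mu^*)a/C\bigr)^{1/s}\le(a/C)^{1/s}\le p/C^{1/s}$, completing the bound.

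For the non-equivalence at $s=0$, I would invoke the known second-order expansion (Tomamichel--Hayashi, Li)
\begin{align*}
-\log\beta_\eps(\rho^{\otimes n}\|\sigma^{\otimes n}) = n\,D(\rho\|\sigma) + \sqrt{n\,V(\rho\|\sigma)}\,\Phi^{-1}(\eps) + O(\log n),
\end{align*}
valid for any $\eps\in(0,1)$ and any $\rho,\sigma$ with relative entropy variance $V(\rho\|\sigma)>0$, where $\Phi^{-1}$ denotes the inverse standard normal CDF. Fixing such a pair and assuming $\eps_1<\eps_2$, the difference of the two expansions yields $\log\bigl(\beta_{\eps_1}(\rho^{\otimes n}\|\sigma^{\otimes n})/\beta_{\eps_2}(\rho^{\otimes n}\|\sigma^{\otimes n})\bigr) = \sqrt{nV}\,(\Phi^{-1}(\eps_2)-\Phi^{-1}(\eps_1))+O(\log n) \to +\infty$ as $n\to\infty$, by strict monotonicity of $\Phi^{-1}$. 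This rules out any uniform bound $\beta_{\eps_1}\le C\beta_{\eps_2}$, hence $\beta_{\eps_1}\nsim\beta_{\eps_2}$.
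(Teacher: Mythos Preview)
Your argument is correct, but both halves differ from the paper's route.

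For the equivalence $Q_C^{(s)}\sim p_\err^{(s)}$, the paper goes through an auxiliary quantity $p_\err^{(s,C)}(\rho,\sigma)\eqdef\min_\Lambda\bigl(\alpha(\Lambda)^{1/s}+C\beta(\Lambda)\bigr)$, first noting $p_\err^{(s)}\sim p_\err^{(s,C)}$ trivially, and then showing $p_\err^{(s,C^{1/s})}\sim Q_C^{(s)}$. The delicate lower bound there is obtained by splitting the minimum over $\Lambda$ according to whether $\alpha(\Lambda)\gtrless C\beta(\Lambda)^s$ and invoking the reciprocity relation $\bigl(Q^{(1/s)}_{1/C^{1/s}}(\sigma\|\rho)\bigr)^{1/s}=C^{1/s}Q_C^{(s)}(\rho\|\sigma)$ from Lemma~\ref{lem:QsCequalities} to handle the ``wrong'' side. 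Your interpolation $\Lambda_\mu=(1-\mu)\Lambda+\mu\1$ with the intermediate value theorem bypasses that detour entirely: it produces a feasible test for $Q_C^{(s)}$ directly from the $p_\err^{(s)}$-optimizer, and the bound $\beta(\Lambda_{\mu^*})=\bigl((1-\mu^*)a/C\bigr)^{1/s}\le a^{1/s}/C^{1/s}\le p/C^{1/s}$ falls out cleanly. This is more elementary and self-contained, since it does not rely on Lemma~\ref{lem:QsCequalities}.

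For the non-equivalence at $s=0$, the paper takes the opposite tack in terms of sophistication: rather than second-order asymptotics, it exhibits a one-shot pair $\sigma=\kb{0}$, $\rho=\eps_1\kb{0}+(1-\eps_1)\kb{1}$ (with $\eps_1>\eps_2$) for which $\beta_{\eps_1}(\rho\|\sigma)=0$ but $\beta_{\eps_2}(\rho\|\sigma)>0$, killing any multiplicative comparison immediately. Your asymptotic argument is valid---the Tomamichel--Hayashi/Li expansion applies to any fixed pair with $V(\rho\|\sigma)>0$---but it is heavier machinery than needed here.
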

\begin{proof}
	For any $C>0$, let us define 
	\begin{align}
	p^{(s,C)}_{\err}(\rho,\sigma) := \min_{0\le\Lambda\le\1} \Big(\left(\Tr\left((\1-\Lambda)\rho\right)\right)^{1/s} + C\Tr\left(\Lambda\sigma\right)\Big).
\end{align}
Clearly $p_{\err}^{(s)} \sim p_{\err}^{(s,C)}$, which follows by noting that
\begin{align*}
	p^{(s,C)}_{\err}(\rho,\sigma) \ge \min\{C,1\}\min_{0\le\Lambda\le\1} \Big(\left(\Tr\left((\1-\Lambda)\rho\right)\right)^{1/s} + \Tr\left(\Lambda\sigma\right)\Big) = \min\{C,1\}p_{\err}^{(s)}(\rho,\sigma)
\end{align*}
and 
\begin{align*}
p^{(s,C)}_{\err}(\rho,\sigma) \le \max\{C,1\} \min_{0\le\Lambda\le\1} \Big(\left(\Tr\left((\1-\Lambda)\rho\right)\right)^{1/s} + \Tr\left(\Lambda\sigma\right)\Big) = \max\{C,1\}p_{\err}^{(s)}(\rho,\sigma).
\end{align*}
\smallskip

Now we proceed to show $p_{\err}^{(s,C^{1/s})} \sim	Q^{(s)}_C$. The upper bound follows by
\begin{align}
p^{(s,C^{1/s})}_{\err}(\rho,\sigma) \nn&= \min_{0\le\Lambda\le\1} \Big(\left(\Tr\left((\1-\Lambda)\rho\right)\right)^{1/s} + C^{1/s}\Tr\left(\Lambda\sigma\right)\Big) \\&\le \min_{\substack{0\le\Lambda\le\1\\ \Tr\left((\1-\Lambda)\rho\right)\le C\left(\Tr(\Lambda\sigma)\right)^s}}\Big(\left(\Tr\left((\1-\Lambda)\rho\right)\right)^{1/s} + C^{1/s}\Tr\left(\Lambda\sigma\right)\Big) \nn\\&\le\min_{\substack{0\le\Lambda\le\1\\ \Tr\left((\1-\Lambda)\rho\right)\le C\left(\Tr(\Lambda\sigma)\right)^s}} 2\, C^{1/s}\Tr\left(\Lambda\sigma\right) = 2C^{1/s} Q^{(s)}_C(\rho\|\sigma).
\end{align}
The lower bound follows by first noting
\begin{align}
p^{(s,C^{1/s})}_{\err}(\rho,\sigma) \nn&= \min_{0\le\Lambda\le\1} \Big(\left(\Tr\left((\1-\Lambda)\rho\right)\right)^{1/s} + C^{1/s}\Tr\left(\Lambda\sigma\right)\Big)  \\&\nn\ge \min_{0\le\Lambda\le\1} \max\Big\{\left(\Tr\left((\1-\Lambda)\rho\right)\right)^{1/s}, C^{1/s}\Tr\left(\Lambda\sigma\right)\Big\} \\&= 
\min\Big\{ \min_{\substack{0\le\Lambda\le\1\\ \Tr\left((\1-\Lambda)\rho\right)\ge C\left(\Tr(\Lambda\sigma)\right)^s}}\left(\Tr\left((\1-\Lambda)\rho\right)\right)^{1/s}, \min_{\substack{0\le\Lambda\le\1\\ \Tr\left((\1-\Lambda)\rho\right)\le C\left(\Tr(\Lambda\sigma)\right)^s}}C^{1/s}\Tr\left(\Lambda\sigma\right)\Big\} \nn\\&= \min\Big\{\min_{\substack{0\le\Lambda\le\1\\ \Tr\left((\1-\Lambda)\rho\right)\ge C\left(\Tr(\Lambda\sigma)\right)^s}}\left(\Tr\left((\1-\Lambda)\rho\right)\right)^{1/s}, C^{1/s} Q_C^{(s)}(\rho\|\sigma) \Big\}.
\end{align}
Moreover,
\begin{align}
\nn\min_{\substack{0\le\Lambda\le\1\\ \Tr\left((\1-\Lambda)\rho\right)\ge C\left(\Tr(\Lambda\sigma)\right)^s}}\left(\Tr\left((\1-\Lambda)\rho\right)\right)^{1/s} &= \min_{\substack{0\le\Lambda\le\1\\ \left(\Tr\left((\1-\Lambda)\rho\right)/C\right)^{1/s}\ge\Tr(\Lambda\sigma)}}\left(\Tr\left((\1-\Lambda)\rho\right)\right)^{1/s} \\&= \left(Q^{(1/s)}_{1/C^{1/s}}(\sigma\|\rho)\right)^{1/s}.
\end{align}
Then using the fact that, by Lemma~\ref{lem:QsCequalities}, $ \left(Q^{(1/s)}_{1/C^{1/s}}(\sigma\|\rho)\right)^{1/s}=C^{1/s}Q_C^{(s)}(\rho\|\sigma),$ we get $$p^{(s,C^{1/s})}_{\err}(\rho,\sigma) \ge C^{1/s}Q_C^{(s)}(\rho\|\sigma).$$ Therefore, $Q_C^{(s)}(\rho\|\sigma)\sim p_{\err}^{(s,C^{1/s})}(\rho,\sigma)\sim p_{\err}^{(s)}(\rho,\sigma)$.
\smallskip

To finish the proof we show the non-equivalence of $\beta_{\eps_1}$ and $\beta_{\eps_2}$ for $\eps_1,\eps_2\in(0,1)$ with $\eps_1\neq \eps_2$. Here, without loss of generality $\eps_1>\eps_2.$ Take then for example $\sigma = \kb{0}$ and $\rho = \eps_1\kb{0} + (1-\eps_1) \kb{1}$. Then $\beta_{\eps_1}(\rho\|\sigma)= 0$ whereas $\beta_{\eps_2}(\rho\|\sigma)>0$. This follows by noting that an operator $0\le\Lambda\le\1$ satisfies $\Tr(\Lambda\sigma) = 0$ if and only if $\Lambda\le \kb{1}$ and hence $\Tr((\1-\Lambda)\rho) \ge \eps_1>\eps_2$.
\end{proof}
\medskip

We now have all the tools to prove that the optimal error exponent of $s$-hypothesis testing is equal to the divergence $\xi_s$ (Theorem~\ref{thm:AEPQs}). We first use the relation of the divergence $\xi_s$ with the quantum Hoeffding bound (cf. Lemma~\ref{lem:xisHoef}) to prove that the asymptotic error exponent of $p_{\err}^{(s)}$ is equal to $\xi_s$. Combining this with Lemma~\ref{lem:QsCEquiv} gives the desired result for the minimal error probability $Q_C^{(s)}$.

Let us recall the result of the quantum Hoeffding bound in a particular form which will be useful for the proof of Theorem~\ref{thm:AEPQs}: the optimal type I error exponent under the constraint that the type II error exponent is greater than or equal to $r$ is given by
\begin{align*}
B(r|\rho\|\sigma) &= \sup_{\substack{(\Lambda_n)_{n\in\N}\\0\le\Lambda_n\le\1}}\left\{\liminf_{n\to\infty}\frac{-\log\Big(\Tr\left((\1-\Lambda_n)\rho^{\otimes n}\right)\Big)}{n}\Bigg|\liminf_{n\to\infty}\frac{-\log\Big(\Tr\left(\Lambda_n\sigma^{\otimes n}\right)\Big)}{n} \ge r \right\}
\end{align*}
Hayashi \cite{Hayashi_Hoeffding(ErrorExponent)_2007} showed the achievability part of the Hoeffding bound which is that
\begin{align}
\label{eq:Hoef1}
B(r|\rho\|\sigma)\ge \sup_{0\le\alpha\le 1} \frac{\alpha-1}{\alpha}\left(r - D_\alpha(\rho\|\sigma)\right).
\end{align}
Nagaoka \cite{Nagaoka_ConverseHoeffding_2006} then proved the converse part. More precisely, from his work it follows\footnote{More precisely, Nagaoka stated the result \eqref{eq:Hoef2} but with limit inferior instead of limit superior applied on the sequence $\left(-\log(\Tr(\Lambda_n\sigma^{\otimes n}))/n\right)$ (see \cite[equations (6) and (11)]{Nagaoka_ConverseHoeffding_2006}). However, the result \eqref{eq:Hoef2} can be directly concluded from Nagaoka's work by modifying, in the obvious way, the inequality he used in his equation (25).} that for all $r>0$

\begin{align}
\label{eq:Hoef2}
&\nn\sup_{0\le\alpha\le 1} \frac{\alpha-1}{\alpha}\left(r - D_\alpha(\rho\|\sigma)\right) \\&\ge
\sup_{\substack{(\Lambda_n)_{n\in\N}\\0\le\Lambda_n\le\1}}\left\{\limsup_{n\to\infty}\frac{-\log\Big(\Tr\left((\1-\Lambda_n)\rho^{\otimes n}\right)\Big)}{n}\Bigg|\limsup_{n\to\infty}\frac{-\log\Big(\Tr\left(\Lambda_n\sigma^{\otimes n}\right)\Big)}{n} \ge r \right\}.
\end{align}
As the right-hand side of \eqref{eq:Hoef2} is lower bounded by $B(r|\rho\|\sigma)$, it follows that both inequalities \eqref{eq:Hoef1} and \eqref{eq:Hoef2} are actually equalities.

We will now prove our main result, given by Theorem~\ref{thm:AEPQs} in the Introduction. For convenience, we restate the result in more generality below.
\begin{theorem}
For all $s,C> 0$ and states $\rho,\sigma$ we have
	\begin{align}
	\lim_{n\to\infty}\frac{-\log\left(Q_C^{(s)}(\rho^{\otimes n}\|\sigma^{\otimes n})\right)}{n} = \lim_{n\to\infty} \frac{-\log\left(p^{(s)}_{\err}(\rho^{\otimes n},\sigma^{\otimes n})\right)}{n} = \xi_s(\rho\|\sigma).
	\end{align}
\end{theorem}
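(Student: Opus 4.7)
By Lemma~\ref{lem:QsCEquiv} we have $Q_C^{(s)}\sim p_{\err}^{(s)}$, so both sequences share the same asymptotic exponential decay rate. Hence the entire theorem reduces to showing
$$\lim_{n\to\infty}\frac{-\log p_{\err}^{(s)}(\rho^{\otimes n},\sigma^{\otimes n})}{n}=\xi_s(\rho\|\sigma).$$
The point of introducing $p_{\err}^{(s)}$ is that its two summands $(\Tr((\1-\Lambda)\rho))^{1/s}$ and $\Tr(\Lambda\sigma)$ are precisely calibrated so that the optimal balance of type I and type II errors sits exactly at the rate $r=\xi_s(\rho\|\sigma)$ appearing in the quantum Hoeffding bound: by Lemma~\ref{lem:xisHoef}, $\xi_s(\rho\|\sigma)$ is precisely the value of $r$ at which $B(r|\rho\|\sigma)=sr$. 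The degenerate cases $\rho=\sigma$ (both sides equal $0$) and $\operatorname{supp}(\rho)\perp\operatorname{supp}(\sigma)$ (take $\Lambda_n=\pi_{\rho^{\otimes n}}$ which makes both error terms vanish, yielding $\xi_s=\infty$) are verified directly, so from now on we assume $\rho\neq\sigma$ with mutually non-orthogonal supports, in which case $0<\xi_s(\rho\|\sigma)<\infty$.

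\textbf{Achievability.} Apply Hayashi's achievability bound \eqref{eq:Hoef1} at rate $r=\xi_s(\rho\|\sigma)$: there exists a sequence $(\Lambda_n)$ of tests with
\begin{align*}
\liminf_{n\to\infty}\frac{-\log\Tr(\Lambda_n\sigma^{\otimes n})}{n}&\ge\xi_s(\rho\|\sigma),\\
\liminf_{n\to\infty}\frac{-\log\Tr((\1-\Lambda_n)\rho^{\otimes n})}{n}&\ge B(\xi_s(\rho\|\sigma)|\rho\|\sigma)=s\,\xi_s(\rho\|\sigma),
\end{align*}
where the last equality uses Lemma~\ref{lem:xisHoef}. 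Raising the type I error to the power $1/s$ and adding, both terms of $p_{\err}^{(s)}$ decay at least as $e^{-n(\xi_s(\rho\|\sigma)-o(1))}$, so $\liminf_{n\to\infty}-\log p_{\err}^{(s)}(\rho^{\otimes n},\sigma^{\otimes n})/n\ge\xi_s(\rho\|\sigma)$.

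\textbf{Converse.} Let $(\Lambda_n^\star)$ be the minimisers in the definition of $p_{\err}^{(s)}(\rho^{\otimes n},\sigma^{\otimes n})$ and set $r\coloneqq\limsup_{n\to\infty}-\log p_{\err}^{(s)}(\rho^{\otimes n},\sigma^{\otimes n})/n$. Since both summands in \eqref{25} are non-negative, $\Tr(\Lambda_n^\star\sigma^{\otimes n})\le p_{\err}^{(s)}$ and $\Tr((\1-\Lambda_n^\star)\rho^{\otimes n})\le(p_{\err}^{(s)})^s$, so
$$\limsup_{n\to\infty}\frac{-\log\Tr(\Lambda_n^\star\sigma^{\otimes n})}{n}\ge r,\qquad \limsup_{n\to\infty}\frac{-\log\Tr((\1-\Lambda_n^\star)\rho^{\otimes n})}{n}\ge sr.$$
Invoking Nagaoka's converse in the $\limsup$-form \eqref{eq:Hoef2} at rate $r$ then forces $B(r|\rho\|\sigma)\ge sr$. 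Since $r\mapsto B(r|\rho\|\sigma)$ is decreasing while $r\mapsto sr$ is increasing, and they cross uniquely at $\xi_s(\rho\|\sigma)$ by Lemma~\ref{lem:xisHoef}, this forces $r\le\xi_s(\rho\|\sigma)$. Combined with achievability, the limit exists and equals $\xi_s(\rho\|\sigma)$, as desired.

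\textbf{Main obstacle.} The delicate point is the converse: the minimiser sequence $(\Lambda_n^\star)$ only provides $\limsup$-control on its type I and type II error exponents, so one must use the $\limsup$-version \eqref{eq:Hoef2} of Nagaoka's converse rather than the more commonly quoted $\liminf$-version. A secondary subtlety is in arguing that $B(r|\rho\|\sigma)\ge sr$ implies $r\le\xi_s(\rho\|\sigma)$: this uses the uniqueness-of-crossing statement in Lemma~\ref{lem:xisHoef}, rather than mere monotonicity of $B$.
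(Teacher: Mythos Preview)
Your proof is correct and follows essentially the same route as the paper: reduce to $p_{\err}^{(s)}$ via Lemma~\ref{lem:QsCEquiv}, obtain achievability from Hayashi's direct part of the Hoeffding bound combined with Lemma~\ref{lem:xisHoef}, and obtain the converse from Nagaoka's $\limsup$-version \eqref{eq:Hoef2} together with the unique-crossing property of Lemma~\ref{lem:xisHoef} (the paper phrases this last step as a proof by contradiction, but the content is identical). One minor technical point: since $B(r|\rho\|\sigma)$ is defined as a supremum that is not a priori attained, in the achievability step you should, as the paper does, take a sequence achieving $\liminf_n -\log\Tr((\1-\Lambda_n)\rho^{\otimes n})/n \ge s\xi_s(\rho\|\sigma)-\eps$ for arbitrary $\eps>0$ and then let $\eps\to 0$, rather than claiming a sequence with exponent exactly $\ge s\xi_s(\rho\|\sigma)$.
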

\begin{proof}
	We will show that
	\begin{align}
	\label{eq:perrsAEP}
	\lim_{n\to\infty} \frac{-\log\left(p^{(s)}_{\err}(\rho^{\otimes n},\sigma^{\otimes n})\right)}{n} = \xi_s(\rho\|\sigma),
	\end{align}
	since the result for $Q_C^{(s)}$ then directly follows by Lemma~\ref{lem:QsCEquiv}.
	Without loss of generality we assume that $\rho\neq \sigma$, since otherwise \comment{$Q_C^{(s)}(\rho^{\otimes n}\|\sigma^{\otimes n})= Q_C^{(s)}(\rho\|\sigma)$ and} $p^{(s)}_{\err}(\rho^{\otimes n},\sigma^{\otimes n})=p^{(s)}_{\err}(\rho,\sigma)$\footnote{This follows by the monotonicity of $p^{(s)}_{\err}$ under CPTP maps \eqref{eq:perrMono} applied for the maps $\cN(\cdot) =\Tr(\cdot) \rho$ and $\cM(\cdot) = \Tr(\cdot) \rho^{\otimes n}$.} and $\xi_s(\rho\|\sigma)=0$ and hence \eqref{eq:perrsAEP} is trivially true.
	For the achievability of \eqref{eq:perrsAEP}, using Lemma~\ref{lem:xisHoef} we can pick for any $\eps>0$ a sequence $\left(\Lambda_n\right)_{n\in\N}$ with $0\le\Lambda_n\le\1$ such that
	\begin{align*}
	\liminf_{n\to\infty}\frac{-\log\left(\Tr\left(\Lambda_n\sigma^{\otimes n}\right)\right)}{n} &\ge \xi_s(\rho\|\sigma),\\
	\liminf_{n\to\infty}\frac{-\log\left(\Tr\left((\1-\Lambda_n)\rho^{\otimes n}\right)\right)}{n} &\ge s\xi_s(\rho\|\sigma)-\eps.
	\end{align*}
	By~\eqref{25}
	\begin{align*}
	p^{(s)}_{\err}(\rho^{\otimes n},\sigma^{\otimes n}) &\le \left(\Tr\left((\1-\Lambda_n)\rho^{\otimes n}\right)\right)^{1/s}+ \Tr\left(\Lambda_n\sigma^{\otimes n}\right)\\  &\le 2 \max\Big\{\left(\Tr\left((\1-\Lambda_n)\rho^{\otimes n}\right)\right)^{1/s},\Tr\left(\Lambda_n\sigma^{\otimes n}\right)\Big\},
	\end{align*}
	and hence
	\begin{align*}
	\nn&\liminf_{n\to\infty} \frac{-\log\left(p^{(s)}_{\err}(\rho^{\otimes n},\sigma^{\otimes n})\right)}{n} \\&\ge\nn \min\left\{\liminf_{n\to\infty}\frac{-\log\left(\left(\Tr\left((\1-\Lambda_n)\rho^{\otimes n}\right)\right)^{1/s}\right)}{n},\liminf_{n\to\infty}\frac{-\log\left(\Tr\left(\Lambda_n\sigma^{\otimes n}\right)\right)}{n}\right\}\\&\ge \xi_s(\rho\|\sigma)-\eps/s.
	\end{align*}
	Since $\eps>0$ was arbitrary, this gives the achievability, i.e.
	\begin{align*}
	\liminf_{n\to\infty} \frac{-\log\left(p^{(s)}_{\err}(\rho^{\otimes n},\sigma^{\otimes n})\right)}{n} \ge \xi_s(\rho\|\sigma).
	\end{align*}
	For the converse assume 
	\begin{align*}
	\limsup_{n\to\infty} \frac{-\log\left(p^{(s)}_{\err}(\rho^{\otimes n},\sigma^{\otimes n})\right)}{n} >  \xi_s(\rho\|\sigma).
	\end{align*}
	Hence, there exists a sequence $\left(\Lambda_n\right)_{n\in\N}$ with $0\le\Lambda_n\le\1$ such that
	\begin{align*}
	\limsup_{n\to\infty}\frac{-\log\Big(\left(\Tr\left((\1-\Lambda_n)\rho^{\otimes n}\right)\right)^{1/s}+ \Tr\left(\Lambda_n\sigma^{\otimes n}\right)\Big)}{n} > \xi_s(\rho\|\sigma)
	\end{align*}
	and therefore in particular also
	\begin{align}
	\label{eq:rstar}
	r_\star \coloneqq 	\min\left\{\limsup_{n\to\infty}\frac{-\log\left(\left(\Tr\left((\1-\Lambda_n)\rho^{\otimes n}\right)\right)^{1/s}\right)}{n},\limsup_{n\to\infty}\frac{-\log\left(\Tr\left(\Lambda_n\sigma^{\otimes n}\right)\right)}{n}\right\}> \xi_s(\rho\|\sigma).
	\end{align}
	Hence, as $r\mapsto B(r|\rho\|\sigma)$ is monotonically decreasing, we see by Lemma~\ref{lem:xisHoef}
	
	\begin{align*}
	B(r_\star|\rho\|\sigma) \le B(\xi_s(\rho\|\sigma))|\rho\|\sigma) = s\xi_s(\rho\|\sigma) < sr_\star.
	\end{align*}
	However, from \eqref{eq:rstar} together with the discussion around \eqref{eq:Hoef2} we find
	
	\begin{align*}
	B(r_\star|\rho\|\sigma)\ge \limsup_{n\to\infty}\frac{-\log\Big(\Tr\left((\1-\Lambda_n)\rho^{\otimes n}\right)\Big)}{n} =  s\limsup_{n\to\infty}\frac{-\log\left(\left(\Tr\left((\1-\Lambda_n)\rho^{\otimes n}\right)\right)^{1/s}\right)}{n}  \ge s r_\star
	\end{align*}
	which is a contradiction. Hence,
	\begin{align*}
	\limsup_{n\to\infty} \frac{-\log\left(p^{(s)}_{\err}(\rho^{\otimes n},\sigma^{\otimes n})\right)}{n} \le \xi_s(\rho\|\sigma).
	\end{align*}
\end{proof}

\medskip

\noindent\textbf{Acknowledgments.} ND would like to thank Yury Polyanskiy for helpful comments.
RS gratefully acknowledges support from the Cambridge Commonwealth, European and International Trust.
\appendix
\section{Finiteness of the quantum Hoeffding bound}
We prove the necessary and sufficient condition for finiteness of $B(r|\rho\|\sigma)$ mentioned in the Introduction.
\begin{lemma}
\label{lem:AppFinHoeff}
Let $\rho,\sigma$ states and $r\ge0$. Then $B(r|\rho\|\sigma)<\infty$ if and only if $r>D_{\min}(\rho\|\sigma).$ Here, $B(r|\rho\|\sigma)$ is defined in \eqref{eq:HoeffdingDef} and $D_{\min}(\rho\|\sigma)= -\log(\Tr(\pi_\rho \sigma))$ with $\pi_\rho$ being the projector onto the support of $\rho$.
\end{lemma}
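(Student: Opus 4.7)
I would split the equivalence into its two directions, with the boundary case $r = D_{\min}(\rho\|\sigma)$ requiring the more delicate argument.

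For ``$r \le D_{\min}(\rho\|\sigma) \Rightarrow B(r|\rho\|\sigma) = \infty$'', I would exhibit a single explicit test sequence. Take $\Lambda_n \coloneqq \pi_\rho^{\otimes n}$. Since $\pi_\rho\rho = \rho$, one has $\Tr\bigl((\1-\Lambda_n)\rho^{\otimes n}\bigr) = 0$, so the corresponding type~I error exponent is $+\infty$. On the other hand $\Tr(\Lambda_n\sigma^{\otimes n}) = \Tr(\pi_\rho\sigma)^n = e^{-nD_{\min}(\rho\|\sigma)}$ gives a type~II error exponent equal to $D_{\min}(\rho\|\sigma) \ge r$, so the sequence is feasible in the supremum defining $B(r|\rho\|\sigma)$ and forces $B(r|\rho\|\sigma) = +\infty$. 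The degenerate sub-case $\Tr(\pi_\rho\sigma) = 0$, i.e.\ $D_{\min}(\rho\|\sigma) = +\infty$, is covered by the same test, with both error probabilities identically zero.

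For the converse direction $r > D_{\min}(\rho\|\sigma) \Rightarrow B(r|\rho\|\sigma) < \infty$, I would invoke Nagaoka's converse~\eqref{eq:Hoef2}. Combined with the trivial inequalities $\liminf\le\limsup$ (applied to both the objective and the type~II constraint, the latter weakening the feasibility set), one obtains
\[
B(r|\rho\|\sigma) \;\le\; \sup_{(\Lambda_n)}\!\Big\{\limsup_n\tfrac{-\log\Tr((\1-\Lambda_n)\rho^{\otimes n})}{n}\,\Big|\,\limsup_n\tfrac{-\log\Tr(\Lambda_n\sigma^{\otimes n})}{n}\ge r\Big\} \;\le\; \sup_{0\le\alpha\le 1} g(\alpha),
\]
where $g(\alpha) \coloneqq \frac{\alpha-1}{\alpha}\bigl(r - D_\alpha(\rho\|\sigma)\bigr)$. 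The task therefore reduces to proving $\sup_{\alpha\in(0,1]} g(\alpha) < \infty$. Rewriting $g(\alpha) = r - \frac{1}{\alpha}\bigl(r + \log\Tr(\rho^\alpha\sigma^{1-\alpha})\bigr)$ makes this transparent: $g$ is continuous on $(0,1]$ with $g(1) = -\log\Tr(\rho\pi_\sigma) < \infty$ (finite since the hypothesis forces the supports of $\rho$ and $\sigma$ to overlap), while as $\alpha\to 0^+$ one has $\log\Tr(\rho^\alpha\sigma^{1-\alpha}) \to -D_{\min}(\rho\|\sigma)$, so the bracket $r + \log\Tr(\rho^\alpha\sigma^{1-\alpha}) \to r - D_{\min}(\rho\|\sigma) > 0$, forcing $g(\alpha) \to -\infty$. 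A continuous function on $(0,1]$ with these two boundary behaviours has finite supremum.

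The only real subtlety, and the main thing the proof must not get wrong, is that the formula $\sup_\alpha g(\alpha)$ remains finite also at $r = D_{\min}(\rho\|\sigma)$ (one can check $r + \log\Tr(\rho^\alpha\sigma^{1-\alpha}) = O(\alpha)$ there), whereas the operational $B$ is infinite by the first step. Hence the boundary case $r = D_{\min}(\rho\|\sigma)$ is \emph{not} implied by the Hoeffding formula and must be handled separately by the explicit support-projector test.
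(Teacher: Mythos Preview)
Your proof is correct and follows essentially the same two-step strategy as the paper: the explicit support-projector test $\Lambda_n = \pi_\rho^{\otimes n}$ for the direction $r\le D_{\min}\Rightarrow B=\infty$, and the Hoeffding upper bound $B(r|\rho\|\sigma)\le\sup_\alpha g(\alpha)$ for the finiteness direction. The only cosmetic differences are that the paper invokes the equality~\eqref{eq:HoeffResult} directly (rather than Nagaoka's converse~\eqref{eq:Hoef2} as an inequality), and bounds $\sup_\alpha g(\alpha)$ by first observing $g(\alpha)\le 0$ on $[0,\alpha_0]$ and then restricting to the compact interval $[\alpha_0,1]$, whereas you argue via the endpoint behaviour $g(1)<\infty$ and $g(\alpha)\to -\infty$; both are the same observation in slightly different dress.
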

\begin{proof}
Let $r>D_{\min}(\rho\|\sigma)$. As $\lim_{\alpha\to 0 }D_\alpha(\rho\|\sigma) = D_{\min}(\rho\|\sigma)$ there exists $\alpha_0>0$ such that $r-D_\alpha(\rho\|\sigma)\ge 0$ for all $0\le\alpha\le \alpha_0$. As also $r>0$ by assumption, the relation \eqref{eq:HoeffResult} can be employed which gives 
\begin{align}
\label{eq:FiniteHoeffCalc}
\nn B(r|\rho\|\sigma) &= \sup_{0\le\alpha\le 1}\frac{\alpha-1}{\alpha}\left(r-D_{\alpha}(\rho\|\sigma)\right) =  \sup_{\alpha_0 \le\alpha\le 1}\frac{\alpha-1}{\alpha}\left(r-D_{\alpha}(\rho\|\sigma)\right) \\&\le \frac{1}{\alpha_0}\sup_{\alpha_0 \le\alpha\le 1}\left|(\alpha-1)\left(r-D_{\alpha}(\rho\|\sigma)\right)\right| =  \frac{1}{\alpha_0}\sup_{\alpha_0 \le\alpha\le 1}\left|(\alpha -1)r-\log(\Tr(\rho^\alpha\sigma^{1-\alpha}))\right|.
\end{align}
Since $r>D_{\min}(\rho\|\sigma)$ we see in particular $D_{\min}(\rho\|\sigma)<\infty$ and hence $\rho$ and $\sigma$ have mutually non-orthogonal supports. By that $\sup_{\alpha_0\le\alpha\le 1}|\log(\Tr(\rho^\alpha\sigma^{1-\alpha}))|<\infty$ which together with \eqref{eq:FiniteHoeffCalc} gives $B(r|\rho\|\sigma)<\infty$.

As $r\mapsto B(r|\rho\|\sigma)$ is monotonically decreasing, for the other direction it suffices to show that for $r=D_{\min}(\rho\|\sigma)$ we have $B(r|\rho\|\sigma) =\infty$.
Let for that $n\in\N$ and $\Lambda_n = \pi_{\rho^{\otimes n}}$. By definition, $$\frac{-\log(\Tr(\Lambda_n\sigma^{\otimes n}))}{n} = \frac{D_{\min}(\rho^{\otimes n}\|\sigma^{\otimes n})}{n}  = D_{\min}(\rho\|\sigma)$$ and $-\log(\Tr((\1-\Lambda_n)\rho^{\otimes n}))/n = \infty$ for all $n\in\N$, which gives $B(D_{\min}(\rho\|\sigma)|\rho\|\sigma) =\infty$ and finishes the proof.
\end{proof}

\nocite{PolyanskiyWu_DissipationOfInformation_2016}
\bibliography{Ref}

\begin{thebibliography}{10}

\bibitem{Audenaert_QuantumChernoff_2007}
K.~M.~R. Audenaert, J.~Calsamiglia, R.~Mu\~noz Tapia, E.~Bagan, L.~Masanes,
  A.~Acin, and F.~Verstraete.
\newblock Discriminating states: The quantum chernoff bound.
\newblock {\em Physical Review Letters}, 98:160501, Apr 2007.

\bibitem{AudenaertNussbaum_AsymptoticErrorRates_2008}
K.~M.~R. Audenaert, M.~Nussbaum, A.~Szkoła, and F.~Verstraete.
\newblock Asymptotic error rates in quantum hypothesis testing.
\newblock {\em Communications in Mathematical Physics}, 279:251--283, 2008.

\bibitem{Chernoff_AsymptoticEfficiencyforTests_1952}
H.~Chernoff.
\newblock {A Measure of Asymptotic Efficiency for Tests of a Hypothesis Based
  on the sum of Observations}.
\newblock {\em The Annals of Mathematical Statistics}, 23(4):493 -- 507, 1952.

\bibitem{CoverThomas_InformationTheory(Book)_2006}
T.~M. Cover and J.~A. Thomas.
\newblock {\em Elements of Information Theory (Wiley Series in
  Telecommunications and Signal Processing)}.
\newblock Wiley-Interscience, USA, 2006.

\bibitem{Datta_MinMaxRelativeEntr_2009}
N.~{Datta}.
\newblock Min- and max-relative entropies and a new entanglement monotone.
\newblock {\em IEEE Transactions on Information Theory}, 55(6):2816--2826,
  2009.

\bibitem{FrankLieb_MonotonicityRenyiEntr_2013}
R.~L. Frank and E.~H. Lieb.
\newblock Monotonicity of a relative rényi entropy.
\newblock {\em Journal of Mathematical Physics}, 54(12):122201, 2013.

\bibitem{Hayashi_Hoeffding(ErrorExponent)_2007}
M.~Hayashi.
\newblock Error exponent in asymmetric quantum hypothesis testing and its
  application to classical-quantum channel coding.
\newblock {\em Physical Review A}, 76:062301, Dec 2007.

\bibitem{HiaiPetz_QuantumSteinLemma_1991}
F.~Hiai and D.~Petz.
\newblock The proper formula for relative entropy and its asymptotics in
  quantum probability.
\newblock {\em Communications in Mathematical Physics}, 143(1):99--114, Dec
  1991.

\bibitem{Hoeffding_AsympOptimalTests_1965}
W.~Hoeffding.
\newblock {Asymptotically Optimal Tests for Multinomial Distributions}.
\newblock {\em The Annals of Mathematical Statistics}, 36(2):369 -- 401, 1965.

\bibitem{Lieb_ConvTraceFunctions(ConcTheo)_1973}
E.~H. Lieb.
\newblock Convex trace functions and the wigner-yanase-dyson conjecture.
\newblock {\em Advances in Mathematics}, 11(3):267--288, 1973.

\bibitem{Nagaoka_ConverseHoeffding_2006}
H.~Nagaoka.
\newblock The converse part of the theorem for quantum hoeffding bound.
\newblock arXiv:quant-ph/0611289, 2006.

\bibitem{Nussbaum_chernoff_2009}
M.~Nussbaum and A.~Szkoła.
\newblock {The Chernoff lower bound for symmetric quantum hypothesis testing}.
\newblock {\em The Annals of Statistics}, 37(2):1040 -- 1057, 2009.

\bibitem{OgawaNagaoka_QuantumStein(StrongConverse)_2000}
T.~{Ogawa} and H.~{Nagaoka}.
\newblock Strong converse and stein's lemma in quantum hypothesis testing.
\newblock {\em IEEE Transactions on Information Theory}, 46(7):2428--2433,
  2000.

\bibitem{Petz_Quasi-entropies_1986}
D.~Petz.
\newblock Quasi-entropies for finite quantum systems.
\newblock {\em Reports on Mathematical Physics}, 23(1):57--65, 1986.

\bibitem{PolyanskiyWu_DissipationOfInformation_2016}
Y.~{Polyanskiy} and Y.~{Wu}.
\newblock Dissipation of information in channels with input constraints.
\newblock {\em IEEE Transactions on Information Theory}, 62(1):35--55, 2016.

\bibitem{SalzDatta_RTSD_2021}
R.~Salzmann, N.~Datta, G.~Gour, X.~Wang, and M.~M. Wilde.
\newblock Symmetric distinguishability as a quantum resource.
\newblock arXiv:2102.12512, 2021.

\bibitem{Tomamichel_FiniteResource(Book)_2016}
M.~Tomamichel.
\newblock {\em Quantum {{Information Processing}} with {{Finite Resources}}},
  volume~5 of {\em {{SpringerBriefs}} in {{Mathematical Physics}}}.
\newblock {Springer International Publishing}.

\bibitem{Umegaki_ConditionalExpectation_1962}
H.~Umegaki.
\newblock {Conditional expectation in an operator algebra. IV. Entropy and
  information}.
\newblock {\em Kodai Mathematical Seminar Reports}, 14(2):59 -- 85, 1962.

\end{thebibliography}
\bibliographystyle{abbrv}
\end{document}